\newcommand{\ind}{\mathds{1}}
\DeclareMathOperator*{\E}{\mathbb{E}}
\newcommand{\tr}{\operatorname{tr}}
\newcommand{\ospace}{\mathcal{O}}
\newcommand{\ispace}{\mathcal{I}}
\newcommand{\pmat}{\mathbf{P}}
\newcommand{\pmatdist}{\mathcal{D}}
\newcommand{\pmatdistfamily}{\mathbb{D}}
\newcommand{\partialinfo}{W}
\newcommand{\topics}{\mathcal{T}}
\newcommand{\amat}{\mathbf{A}}
\newcommand{\Rset}{\mathbb{R}}
\renewcommand{\Pr}{\mathbb{P}}
\newcommand \reals {\Rset}
\newcommand \acc {\operatorname{Acc}}
\newcommand \accr {\acc_\text{R}}
\newcommand \accm {\acc_\text{M}}
\newtheorem{example}{Example}
\newtheorem{thm}{Theorem}
\def\E{\mathbb{E}}
\def\ind{\mathbb{I}}
\newtheorem{lemma}[]{Lemma}
\newtheorem{proposition}[]{Proposition}
\newtheorem{corollary}{Corollary}
\newtheorem{assumption}{Assumption}
\theoremstyle{definition}
\newtheorem{definition}[]{Definition}
\newcommand \email [1]{\small{\url{#1}}}
\title{Measuring Re-identification Risk}
\begin{document}

\author{
CJ Carey \\ Google \\ \email{cjcarey@google.com}
\and
Travis Dick \\ Google \\ \email{tdick@google.com}
\and
Alessandro Epasto \\ Google \\ \email{aepasto@google.com}
\and
Adel Javanmard \\ USC\footnote{Also affiliated with Data Sciences and Operations, University of Southern California.}, Google \\ \email{adeljavanmard@google.com}
\and
Josh Karlin \\ Google \\ \email{jkarlin@google.com}
\and
Shankar Kumar \\ Google \\ \email{shankarkumar@google.com}
\and
Andr\'es Mu\~noz Medina \\ Google \\ \email{ammedina@google.com}
\and
Vahab Mirrokni \\ Google \\ \email{mirrokni@google.com}
\and
Gabriel Henrique Nunes \\ UFMG\footnote{Also affiliated with Universidade Federal de Minas Gerais.}, Google \\ \email{ghn@nunesgh.com}
\and
Sergei Vassilvitskii \\ Google \\ \email{sergeiv@google.com}
\and
Peilin Zhong \\ Google \\ \email{peilinz@google.com}
}

\maketitle

\begin{abstract}
Compact user representations (such as embeddings) form the backbone of personalization services. In this work, we present a new theoretical framework to measure re-identification risk in such user representations. Our framework, based on hypothesis testing, formally bounds the probability that an attacker may be able to obtain the identity of a user from their representation. As an application, we show how our framework is general enough to model important real-world applications such as the Chrome's Topics API for interest-based advertising. We complement our theoretical bounds by showing provably good attack algorithms for re-identification that we use to estimate the re-identification risk in the Topics API. We believe this work provides a rigorous and interpretable notion of re-identification risk and a framework to measure it that can be used to inform real-world applications. 
\end{abstract}

\section{Introduction}
 From curated travel suggestions to localized search results and relevant ads, personalization of online content has become increasingly important, as users expect online systems to be intelligent enough to anticipate their needs. The majority of these systems are now powered by so-called user representations --- for instance dense vector embeddings in $\Rset^d$ or discrete tokens---that enable classifiers and recommendation systems to generate useful personalized content. %

User representations form a compact description of a user profile that distill a user's interest into a short vector. For instance, a music streaming service may represent a user by a genre of songs the user likes. More sophisticated embeddings may be trained using neural networks and not be as readily explainable. 

The compact representation of user profiles may be seen as providing some privacy: instead of describing a user by all raw data (e.g. all songs they have listened to), one can summarize their profile in a few bits. Of course, such informal privacy arguments can be incorrect, and do not account for sophisticated attacks or non-obvious data leak vectors. In this work we study the question formally. Specifically, we ask to what extent user representations can be used to link back to and re-identify individuals. We provide a rigorous and interpretable notion of re-identification risk and a framework to measure it.

\paragraph{Assessing Privacy}
Since the seminal work of Dwork et al.~\cite{dwork2006calibrating} introducing differential privacy, significant effort has been devoted into developing differentially private algorithms for numerous problems and settings. Despite these efforts, far from all production systems in use currently employ differential privacy. Moreover, the ever evolving legal framework for privacy (such as GDPR~\cite{cohen2020towards} and other regulations) has introduced over time even more privacy definitions, while research has pointed out that these definitions are often not compatible with basic privacy properties like composability~\cite{cohen2020towards}. 

An immediate question is how to assess the privacy of such real-world systems. If the underlying method is randomized, one can try to establish its sensitivity to a single user's input and derive differential privacy bounds. However, just as in the GDPR example highlighted by~\cite{cohen2020towards}, the bounds may be vacuous. In this work we propose a complementary approach. We develop a rigorous new framework, centered around re-identification risk, and prove two kinds of results. First, is a series of unconditional lower bounds, showing the risk present in the system. Second, we give a formalization of upper bounds under a ``closed-world'' assumption, where we can characterize all of the  information an attacker has access to. The closed world assumption is natural in some scenarios, and unrealistic in others, in either case our framework can provide a spectrum of the re-identification risk incurred by the users.   

We remark that measuring re-identification risk complements the guarantees provided by other notion of privacy, such as (local) differential privacy and k-anonymity, by giving guarantees in a different domain. As part of our analysis, we derive re-id risk guarantees for differentially private algorithms, and k-anonymous datasets. Critically, however, this notion can provide a measure of risk for systems adopting other ad-hoc privacy approaches, an important contribution given the large number of products not designed with differential privacy guarantees.

\paragraph{Applications}
One concrete example we study is the the Topics API~\cite{topics-github} proposed by Chrome as part of the Privacy Sandbox initiative~\cite{privacysandbox}.  In a nutshell, the Topics API creates a representation of a user corresponding to the top interests of the user in a taxonomy of about 350 interests. When a website calls the Topics API, it returns one of the top interests of the user uniformly at random (and with some small probability a random topic from the taxonomy not necessarily in the user representation). We present a formal study quantifying the extent to which the Topics API can be used to re-identify users across different domains. 

We couple this work with an experiment quantifying the risk in releasing samples  of music listened by users in a large song dataset. Using our framework, we can quantify re-identification risk in this example, and show that with 4 independent samples per user, the re-id risk is around 1\%.

\subsection{Main contributions}
In summary, we make the following main contributions:
\begin{itemize}
\item We describe a hypothesis testing~\cite{blahut1974hypothesis} framework for evaluating the risk of identifying a random user given access to a representation as well as knowledge of the (potentially randomized) process that generates the representation. 
\item We extend this scenario and  consider a case where an attacker not only observes the representation of a random user but also observes representations of all users and uses this information to match representations to user identities. We refer to this scenario as the matching setting.
\item We measure the re-identification risk against an attacker that does not fully know the representation generation distribution (e.g., due to uncertainty on the underlying user data) but has a prior over the set of distributions that generated  the representations. 
\item We specialize this scenario to the Topics API for which we provide an in-depth analysis of an attacker's re-identification ability under this model. 
\item  We validate our results on two datasets. First a publicly available song dataset, and second a simulation of the Topics API based on proprietary data.  %
\end{itemize}

\section{Related Work}

Our work builds upon the rich literature on data anonymization~\cite{anonymization}, statistical privacy~\cite{statistical_privacy}, representation learning~\cite{representation_learning}, and web fingerprinting and tracking~\cite{browser_fingerprinting} and more generally on privacy preserving data mining and analysis~\cite{agrawal2000privacy,evfimievski2003limiting,verykios2004state}.

Here we discuss only elements of these areas that most relate to the problem of user re-identification. In Section~\ref{sec:comparison} we provide a more in-depth study of how our notion of re-identification relates to other privacy definitions.

\paragraph{Privacy preserving data releases}
This privacy framework traditionally models a data curator that wants to release data without leaking individual user information. In the context of sharing user representations, two of the most popular schemes for doing this are $k$-anonymity~\cite{kanon} and differential privacy~\cite{dwork}. In $k$-anonymity, the data curator ensures that each user representation is shared by at least another $k$ users. This is normally achieved by using optimization techniques for minimizing error in redacting tabular data~\cite{kenig2012practical,park2010approximate,ilprints645,meyerson2004complexity}, as well as  clustering algorithms~\cite{floc}, and tree-based methods~\cite{lefevre2006mondrian}. Differential privacy and in particular local differential privacy \cite{localdp}, introduce noise into the user representations to bound that information about a user that is leaked by the representation. There is a vast literature on differential privacy~\cite{dwork2006calibrating,mcsherry2007mechanism,chaudhuri2011differentially,roy2020crypte,machanavajjhala2017differential,dwork2019differential} (we refer to~\cite{dwork} for a survey). Significant work in differential privacy has been devoted to private representation learning and machine learning~\cite{abadi2016deep}.

In Section~\ref{sec:comparison}, we show that these two notions are theoretically sufficient (but not necessary) to prove low re-identification risk.

\paragraph{Browser fingerprinting and web tracking}
It is well known that data brokers and ad tech providers can use technologies such as third-party cookies to build detailed browsing history profiles of users~\cite{3pcookietracking}. With the impending deprecation of third-party cookies by major browsers, significant attention has been focused by the research community on understanding covert ways of tracking. One such method is browser fingerprinting~\cite{browser_fingerprinting}, which relies on characteristics of devices including the screen size and operating system versions to uniquely identify a user across the web. Several studies~\cite{unique_browser,fingerprintingfr,ml_for_fingerprinting} have analyzed how much information (measured by entropy) trackers can gain from such APIs to create a user identifier. The more entropy an API has, the more likely is that it can be used to re-identifying a user. With some few exceptions~\cite{battery_fingerprinting} most of the research work in this area has focused on understanding the re-identification risk of an API over a fixed instant in time. Our work expands this area of research by modeling the information leakage across time as values returned by an API can change (see section~\ref{sec:case-study}). Moreover, we believe our notion of re-identification can have a more straight-forward semantic interpretation compared to the number of bits leaked by an API as it directly measures the re-identification risk. 

\paragraph{Information theoretic notions of privacy}
Finally, there is a rich body of literature \cite{daleniousprivacy,miprivacy,robustinformationtheory} devoted to studying privacy leakage of information release through the lens of channel capacity. In the context of this paper, a channel is an object that takes as input user information and outputs a user representation. The information theoretic view of privacy measures the ability of an attacker with access to the output of a channel (and with knowledge of the channel internal mechanism) to reconstruct the input to the channel. The quantification of this ability traditionally involves analyzing the mutual information~\cite{elementsofit} between the input and output of the channel and its generalizations~\cite{renyimi}. In  Section~\ref{sec:comparison} we show how one can derive bounds on re-identification risk using mutual information.  Another related work is that of Cohen et al.~\cite{cohen2020towards} that formalizes privacy as a protection against singling out a user. Unlike our paper, it makes certain assumptions on data generation not present in our work (e.g., that data must be i.i.d. from certain distributions).

Finally, we compare our framework to that of  quantitative information flow (QIF)~\cite{qif}, a recent generalization of information theory which provides a flexible and semantically sound framework for analyzing the privacy risks of a channel. As we shall see later on, elements of our framework can be seen as a special case of the QIF framework. By specializing it however, we are able to provide tight characterizations of optimal re-identification attacks.

\section{Model}\label{sec:model}
We consider a universe of $n$ users indexed by identities in the set $\ispace = [n]$ and an attacker whose goal is to re-identify users based on their representations. Our goal is to quantify the extent to which an attacker can re-identify users as a function of how the user representations are constructed. 
We study this problem in two settings: First, in the random-user setting, one of the $n$ users is chosen uniformly at random from $\ispace$, that user's representation is revealed to the attacker, and the attacker attempts to guess the user's identity. 
Next, in the matching setting, the attacker observes representations for all $n$ users, and their goal is to match each of the $n$ representations to one of the $n$ user identities. Notice how this modeling can be seen as a form of hypothesis testing, where the hypothesis corresponds to the possible identities of the user.

We begin by formally defining the processes that generate representations for each user. We then describe the formal attack model, which quantifies how attackers interact with these representations.

\paragraph{Representations}
We assume that user representations belong to a representation space $\ospace$ with finite cardinality $|\ospace| = m$. For instance, $\ospace$ may be the set of all music genres.\footnote{We make the finite cardinality assumption for simplicity, as it is sufficient to elucidate the main results of the paper and to model the Topics API. We believe however our theory can be extended to dense embeddings as well.} Note that we do not make any further structural assumptions on $\ospace$. For example, $\ospace$ may be a finite set of topics, songs, images, or points in $\reals^d$.

An important part of our setup is that a user is not associated with a single representation, but rather with a {\em distribution} over representations. In other words, the process that assigns a user to a representation may be randomized. As a running example, suppose a user is equally interested in {\em classical} music and {\em alt-rock}. One way to represent this behavior is as a fractional assignment $\{0.5, 0.5\}$ to the two genres. Then, whenever a single genre is required, one can select it from the given distribution. (Looking ahead, this fractional assignment will be key for strengthening  re-identification protections.) 

Formally, we can encode the representation distributions assigned to all $n$ users using a row-stochastic \emph{representation matrix} $\pmat \in [0,1]^{n \times m}$ where $\pmat[i,o]$ is the probability that user $i \in \ispace$ has representation $o \in \ospace$  (where we slightly abuse notation and use elements of $\ospace$ to index columns of $\pmat$).
We write $\pmat[i, :]$ to denote the $i^{\rm th}$ row of $\pmat$, which is the user $i$'s distribution over $\ospace$.\footnote{We stress that the set $O$ is arbitrary and this allows representing arbitrary discrete distributions, including over high dimensional spaces. We leave generalizing our framework to continuous distributions as a future work.}

Finally, in order to reason about potentially randomized algorithms that assign representations to users, we model the process that constructs the representation matrix $\pmat$ (i.e., that assigns representations to users) as a distribution $\pmatdist$ over representation matrices in $[0,1]^{n \times m}$. The distribution $\pmatdist$ models both the user behavior informing the representations and algorithms used to construct the assignment of representation distributions to users.

\paragraph{Attack Model}
We begin with nature sampling a single representation matrix $\pmat$ from $\pmatdist$. At that point, attackers will attempt to re-identify users based on representations sampled from $\pmat$. We detail this further below.  %

The attacker's goal is to re-identify users based on their representations  (instead of using them for the intended use case---e.g., personalization of content). We categorize attackers by the varying degrees of knowledge about the sampled representation matrix $\pmat$. 

In the {\em full-information} setting, we assume that the attacker observes the sampled representation matrix $\pmat$. 
This corresponds to a powerful attacker with detailed knowledge of all users and the representation generation process $\pmatdist$.
In the partial-information setting, we assume a weaker attacker: one that receives a vector containing one representation sampled from each user's representation distribution: $\partialinfo \in \ospace^n$ where $\partialinfo_i \sim \pmat[i,:]$. %
This corresponds to a situation where an attacker only learns about $\pmat$ as a client consuming representations. 

In both scenarios, the attacker uses their knowledge of $\pmat$ (either the actual $\pmat$ itself or the vector of representations $\partialinfo$) in order to construct a prediction rule $\varphi$ that they will use to re-identify or match users. In the full-information setting, we require that $\varphi$ be $\pmat$-measurable, while in the partial-information setting, $\varphi$ must be $\partialinfo$-measurable.

\paragraph{Success Metrics}
In the random-user setting, the attacker attempts to re-identify a single randomly chosen user based on a sample of their representation. 

Formally, the attacker uses their knowledge of $\pmat$ to construct a possibly randomized prediction rule $\varphi_R : \ospace \to \ispace$. We define the accuracy random variable as follows: let $I$ be a uniformly random sample from $[n]$,  $O$ be sampled from $\pmat[I, :]$, and define
\[
\accr(\varphi_R) = \Pr(\varphi_R(O) = I \,|\, \pmat),
\]
which is the probability that the attacker correctly re-identifies the random user conditioned on the representation matrix $\pmat$.

In the matching setting, the attacker receives a set of representations, one for each user, and their goal is to match them to the user identities. 

Formally, the attacker uses their knowledge of $\pmat$ to construct a matching rule $\varphi_M : \ospace^n \to \ispace^n$. We define the matching accuracy random variable as follows: Sample a permutation $\pi : [n] \to [n]$ uniformly at random, a vector of independent representations $O_{1:n} \in \ospace^n$, where $O_i \sim \pmat[\pi(i), :]$ is a representation sampled for user $\pi(i)$, and define
\[
\accm(\varphi_M) = \E\left[\frac{1}{n}\sum_{i=1}^n \ind\{\varphi_M^i(O_{1:n}) = \pi(i) \} \,\bigg|\, \pmat\right],
\]
where $\varphi_M^i(O_{1:n})$ denotes the $i^\text{th}$ component of $\varphi_M(O_{1:n})$, which is the attacker's prediction for the user identity that produced observation $O_i$.
This is the expected fraction of users the attacker correctly re-identifies, conditioned on the representation matrix $\pmat$. 

We do not require that $\varphi_M(O_{1:n})$ is a permutation of $\ispace$. Note that the random permutation $\pi$ is used to ensure that the indices in the representation vector $O_{1:n}$ are not useful for re-identification. In particular, any constant prediction rule $\varphi_M$ has $\accm(\varphi_M) = 1/n$.

\section{Random-user Accuracy Bounds} \label{sec:randomUser}

In this section we provide information theoretic bounds on the accuracy that any attacker can achieve in the random-user setting. 
Recall that, in the random user setting, we sample a representation matrix $\pmat$ from the distribution $\pmatdist$ and the attacker formulates a (possibly randomized) prediction rule $\varphi_R : \ospace \to \ispace$ based on their knowledge of $\pmat$.
We let $I$ be a user drawn uniformly at random from $\ispace$ and $O \in \ospace$ be a representation sampled from $\pmat[I,:]$, and the accuracy random variable, $\accr(\varphi_R)$, is the probability that $\varphi(O) = I$, conditioned on the representation matrix $\pmat$.
Our main result  holds with probability one over the draw of the representation matrix $\pmat$ from $\pmatdist$. 
First, for any prediction rule $\varphi_R : \ospace \to \ispace$, we provide an exact expression for $\accr(\varphi_R)$. 
Next, we prove an upper bound on $\accr(\varphi_R)$ that depends only on $\pmat$. 
Our upper bound is tight in the full information setting.

Before stating our results, we introduce a matrix representation for any (possibly randomized) prediction rule $\varphi_R : \ospace \to \ispace$.
Define the matrix $\amat \in [0,1]^{n \times m}$ to have entries $\amat[i,o] = \Pr(\varphi_R(o) = i \,|\, \pmat)$. 
That is, conditioned on $\pmat$, $\amat[i,o]$ is the probability that the attacker's rule predicts user $i \in \ispace$ after receiving representation $o \in \ospace$.
With this, we are ready to state our main results.

\begin{lemma}\label{randomUserAccuracy}
    Let $\pmat \sim \pmatdist$ be a random representation matrix and $\amat$ be the matrix representation of an attacker's prediction rule $\varphi_R : \ospace \to \ispace$. Then, with probability one, we have that
    \[
    \accr(\varphi_R) = \frac{1}{n}\tr(\pmat\amat^\top).
    \]
\end{lemma}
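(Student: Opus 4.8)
\emph{Proof sketch.} The plan is to verify the identity pointwise, for every representation matrix $\pmat$ in the support of $\pmatdist$; since this exhausts a probability-one set of draws, the almost-sure claim follows immediately. So I fix such a $\pmat$ and condition on it throughout. Unwinding the definition of $\accr$ by the law of total probability over the uniform draw of $I$ and the representation $O \sim \pmat[I,:]$, I would write
\[
\accr(\varphi_R) = \sum_{i=1}^n \Pr(I = i)\sum_{o \in \ospace} \Pr(O = o \mid I = i, \pmat)\,\Pr\big(\varphi_R(o) = i \mid I = i, O = o, \pmat\big).
\]

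Next I would invoke the two structural facts built into the model. First, $\Pr(I = i) = 1/n$ and $\Pr(O = o \mid I = i, \pmat) = \pmat[i,o]$ directly by construction. Second — and this is the one point that deserves care — the internal randomness used by the possibly randomized rule $\varphi_R$ is independent of the pair $(I, O)$ once we condition on $\pmat$, because $\varphi_R$ is required to be $\pmat$-measurable, i.e.\ it is formed knowing only $\pmat$ and fresh independent coins. Consequently $\Pr(\varphi_R(o) = i \mid I = i, O = o, \pmat) = \Pr(\varphi_R(o) = i \mid \pmat) = \amat[i,o]$, which is exactly the definition of the matrix representation $\amat$ of $\varphi_R$. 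Substituting yields
\[
\accr(\varphi_R) = \frac{1}{n}\sum_{i=1}^n \sum_{o \in \ospace} \pmat[i,o]\,\amat[i,o].
\]

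Finally I would recognize the inner sum as a diagonal entry of $\pmat\amat^\top$: since $(\pmat\amat^\top)[i,i] = \sum_{o \in \ospace}\pmat[i,o]\,\amat[i,o]$, the right-hand side equals $\frac{1}{n}\sum_{i=1}^n (\pmat\amat^\top)[i,i] = \frac{1}{n}\tr(\pmat\amat^\top)$, completing the argument. I do not expect a genuine obstacle here; the only step requiring attention is the justification of the conditional independence of the rule's internal coins from $(I, O)$ given $\pmat$, which is precisely what the $\pmat$-measurability requirement in the attack model provides. Everything else is bookkeeping with the law of total probability and the definition of the trace.
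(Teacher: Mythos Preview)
Your proof is correct and follows essentially the same route as the paper: expand $\accr(\varphi_R)$ via the law of total probability over $(I,O)$, identify the conditional probabilities with $\pmat[i,o]$ and $\amat[i,o]$, and recognize the resulting double sum as the Frobenius inner product $\tr(\pmat\amat^\top)$. If anything, you are slightly more explicit than the paper about why the attacker's internal coins are independent of $(I,O)$ given $\pmat$, which is a welcome clarification rather than a departure.
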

\begin{proof}
    Let $I$ be a user chosen uniformly at random from $\ispace$ and $O \in \ospace$ be a representation sampled from $\pmat[I,:]$. 
    From the law of total probability, we have that
    \begin{align*}
        \Pr(\varphi(O) = I \,|\, \pmat)
        = &\sum_{o \in \ospace} \sum_{i=1}^n
           \Pr(\varphi_R(O) = I \,|\, O=o, I=i, \pmat)
           \cdot \Pr(O=o, I=i \,|\, \pmat) \\
        =& \frac{1}{n}\sum_{o \in \ospace} \sum_{i=1}^n 
           \amat[i,o] \cdot \pmat[i,o],
    \end{align*}
    where the double sum is equal to the Frobenius inner product of $\pmat$ and $\amat$, which can also be written as $\tr(\pmat \amat^\top)$.
\end{proof}

Next, we provide upper bounds on $\accr(\varphi_R)$ in terms of the representation matrix $\pmat$.

\begin{corollary}\label{randomUserAccuracyBound}
    Let $\pmat \sim \pmatdist$ be a random representation matrix and let $\varphi_R$ be any prediction rule. 
    Then, with probability one, we have that
    \[
    \accr(\varphi_R) \leq \frac{1}{n}\sum_{o \in \ospace} \max_{i \in \ispace} P[i,o] =:\frac{1}{n} \|\pmat\|_{\infty,1}.
    \]
    Additionally, there exists an attacker in the full-information setting capable of constructing $\varphi_R^*$ such that $\accr(\varphi_R^*) = \frac{1}{n} \|\pmat\|_{\infty,1}$.
\end{corollary}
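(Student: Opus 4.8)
The plan is to derive everything from Lemma~\ref{randomUserAccuracy}, which gives the exact identity $\accr(\varphi_R) = \frac{1}{n}\tr(\pmat\amat^\top) = \frac{1}{n}\sum_{o\in\ospace}\sum_{i=1}^n \pmat[i,o]\,\amat[i,o]$, valid with probability one over $\pmat \sim \pmatdist$. The key structural observation is that the matrix $\amat$ is \emph{column-stochastic}: for every fixed $o\in\ospace$, the entries $\amat[i,o] = \Pr(\varphi_R(o)=i \mid \pmat)$ form a probability distribution over $\ispace$, so $\amat[i,o]\ge 0$ and $\sum_{i=1}^n \amat[i,o] = 1$. This holds regardless of whether $\varphi_R$ is deterministic or randomized, and regardless of the information setting, since $\varphi_R$ always outputs some identity in $\ispace$.

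Given this, the upper bound is immediate: for each column $o$,
\[
\sum_{i=1}^n \pmat[i,o]\,\amat[i,o] \;\le\; \Big(\max_{i\in\ispace}\pmat[i,o]\Big)\sum_{i=1}^n \amat[i,o] \;=\; \max_{i\in\ispace}\pmat[i,o],
\]
and summing over $o\in\ospace$ and dividing by $n$ yields $\accr(\varphi_R) \le \frac{1}{n}\sum_{o\in\ospace}\max_{i\in\ispace}\pmat[i,o] = \frac{1}{n}\|\pmat\|_{\infty,1}$, which is exactly the claimed inequality and holds with probability one.

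For the second part, I would exhibit the explicit attacker. In the full-information setting the attacker observes $\pmat$, so it can compute, for each $o\in\ospace$, an index $i^*(o) \in \argmax_{i\in\ispace}\pmat[i,o]$ (fixing any deterministic tie-breaking rule, e.g.\ smallest index), and define the deterministic rule $\varphi_R^*(o) = i^*(o)$. This rule is $\pmat$-measurable by construction, hence admissible. Its matrix representation $\amat^*$ has $\amat^*[i,o] = \ind\{i = i^*(o)\}$, so $\sum_{i=1}^n \pmat[i,o]\,\amat^*[i,o] = \pmat[i^*(o),o] = \max_{i\in\ispace}\pmat[i,o]$ for every $o$. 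Plugging into Lemma~\ref{randomUserAccuracy} gives $\accr(\varphi_R^*) = \frac{1}{n}\sum_{o\in\ospace}\max_{i\in\ispace}\pmat[i,o] = \frac{1}{n}\|\pmat\|_{\infty,1}$, matching the upper bound.

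There is essentially no serious obstacle here; the proof is a one-line consequence of the earlier lemma plus the column-stochasticity of $\amat$. The only points requiring a little care are (i) stating clearly why $\amat$ has unit column sums (the prediction rule always returns \emph{some} identity), and (ii) noting that the optimal rule $\varphi_R^*$ depends on $\pmat$ and therefore is only available to the full-information attacker — a partial-information attacker who sees only $\partialinfo$ cannot in general realize it, which is consistent with the corollary's phrasing that the bound is achievable \emph{in the full-information setting}.
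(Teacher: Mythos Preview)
Your proof is correct and follows essentially the same approach as the paper: both start from Lemma~\ref{randomUserAccuracy}, use the column-stochasticity of $\amat$ to bound each column's contribution by $\max_i \pmat[i,o]$, and then exhibit the full-information attacker that puts all mass on an $\argmax$ index to achieve equality. The only cosmetic difference is that the paper phrases the tightness condition abstractly (``$\amat[i,o]>0$ implies $\pmat[i,o]=\max_j\pmat[j,o]$'') while you write down the explicit deterministic rule, which is a fine and arguably clearer presentation.
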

\begin{proof}
    Let $\varphi_R : \ospace \to \ispace$ be any attacker prediction rule and let $\amat \in [0,1]^{n \times m}$ be its matrix representation. 
    By \Cref{randomUserAccuracy}, the following holds with probability one:
    \begin{flalign*}
        \accr(\varphi_R)& 
        = \frac{1}{n} \sum_{o \in \ospace} \sum_{i=1}^n \amat[i,o] \pmat[i,o] 
        \leq \frac{1}{n} \sum_{o \in \ospace} \sum_{i = 1}^n
          \amat[i,o] \cdot \max_{j \in [n]} \pmat[j,o] \\
        &= \frac{1}{n} \sum_{o \in \ospace} \max_{i \in [n]} \pmat[i,o] 
        = \frac{1}{n} \|\pmat\|_{\infty,1},
    \end{flalign*}
    where the second last equality follows from the fact that for any $o \in \ospace$, we have $\sum_{i=1}^n \amat[i,o] = 1$.
    
    Finally, the inequality in the above derivation holds with equality whenever, for each representation $o \in \ospace$ and user $i \in \ispace$, $\amat[i,o] > 0$ implies that $\pmat[i,o] = \max_j \pmat[j,o]$. 
    An attacker in the full-information setting can design $\amat$ so that this holds, which implies they are able to achieve this accuracy exactly. 
    In other words, if the attacker designs $\varphi_R^*$ so that $\varphi_R^*(o)$ is always some user $i$ with the maximum probability of generating representation $o$, then $\accr(\varphi_R) = \frac{1}{n}\|\pmat\|_{\infty,1}$.
\end{proof}

Next, we prove a bound on the accuracy of an attacker with partial-information. In particular, we bound their expected accuracy conditioned on the partial information contained in $W \in \ospace^n$. This is the probability that the attacker will correctly predict the identity of a random user based on a sample representation, conditioned on vector $\partialinfo$.

\begin{lemma}\label{lemma:partial_info_bound}
    Let $\pmat \sim \pmatdist$ be a random representation matrix, $\partialinfo \in \ospace^n$ be a vector of independent representations where $\partialinfo_i \sim \pmat[i,:]$, and let $\varphi_R : \ospace \to \ispace$ be a prediction rule that is $W$-measurable. 
    Then we have that
    \[
    \E\bigl[ \accr(\varphi_R) \,|\, W \bigr] \leq \frac{1}{n} \bigl\|\E[\pmat \,|\, W]\bigr\|_{\infty, 1}.
    \]
    Moreover there exists a prediction rule $\varphi_R^*$ for which the upper bound is achieved.
\end{lemma}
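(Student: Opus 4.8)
The plan is to mirror the proof of \Cref{randomUserAccuracyBound}, working conditionally on $\partialinfo$ and with $\pmat$ replaced by the conditional mean matrix $\bar{\pmat} := \E[\pmat \mid \partialinfo]$. The role of the $\partialinfo$-measurability hypothesis on $\varphi_R$ is this: once $\partialinfo$ is fixed, the rule, and hence its matrix representation $\amat \in [0,1]^{n \times m}$ from \Cref{randomUserAccuracy}, is determined; in particular $\amat$ is $\partialinfo$-measurable, has nonnegative entries, and satisfies $\sum_{i \in \ispace} \amat[i,o] = 1$ for every $o \in \ospace$.

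First I would establish the conditional analogue of \Cref{randomUserAccuracy}. Holding $\amat$ fixed and running the law-of-total-probability computation from that lemma's proof over the fresh draw of a uniform $I \in \ispace$ and $O \sim \pmat[I,:]$ shows that $\accr(\varphi_R) = \frac{1}{n}\tr(\pmat \amat^\top)$, now understood as an identity of random variables (both sides are functions of $\pmat$, with $\amat$ held fixed). Taking $\E[\,\cdot \mid \partialinfo]$ of both sides, using linearity of the trace together with the $\partialinfo$-measurability of $\amat$ (so that $\amat$ factors out of the conditional expectation), gives
\[
\E[\accr(\varphi_R) \mid \partialinfo] = \frac{1}{n}\tr\big(\bar{\pmat}\, \amat^\top\big) = \frac{1}{n}\sum_{o \in \ospace}\sum_{i \in \ispace} \bar{\pmat}[i,o]\,\amat[i,o].
\]
Now I would repeat the bounding step of \Cref{randomUserAccuracyBound} verbatim with $\bar{\pmat}$ in place of $\pmat$: since $\bar{\pmat}$ has nonnegative entries, for each column $o$ we have $\sum_{i}\bar{\pmat}[i,o]\amat[i,o] \le \big(\max_{j}\bar{\pmat}[j,o]\big)\sum_{i}\amat[i,o] = \max_{j}\bar{\pmat}[j,o]$, and summing over $o \in \ospace$ yields the claimed bound $\E[\accr(\varphi_R) \mid \partialinfo] \le \frac{1}{n}\|\E[\pmat \mid \partialinfo]\|_{\infty,1}$. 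For the second claim, I would take $\varphi_R^*(o) \in \argmax_{i \in \ispace}\bar{\pmat}[i,o]$ with ties broken by any fixed rule; this is $\partialinfo$-measurable because $\bar{\pmat}$ is, its matrix representation places all its mass on a column-maximizer, so the displayed inequality becomes an equality for every $o$, giving $\E[\accr(\varphi_R^*) \mid \partialinfo] = \frac{1}{n}\|\E[\pmat \mid \partialinfo]\|_{\infty,1}$.

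I expect the one genuinely delicate point to be the first display: the assertion that the $\accr(\varphi_R)$ appearing there is literally the random variable $\frac{1}{n}\tr(\pmat\amat^\top)$ with the \emph{same} $\partialinfo$-measurable $\amat$, which is precisely what lets $\amat$ be pulled out of $\E[\,\cdot \mid \partialinfo]$. This is why it is cleaner to re-derive the conditional version of \Cref{randomUserAccuracy} than to invoke the unconditional statement as a black box. Everything after that is a routine repetition of the full-information argument, so I do not anticipate any further obstacles.
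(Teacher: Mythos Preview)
Your proposal is correct and follows essentially the same approach as the paper: both use $\accr(\varphi_R)=\frac{1}{n}\tr(\pmat\amat^\top)$, pull the $\partialinfo$-measurable $\amat$ out of $\E[\,\cdot\mid \partialinfo]$, and then rerun the \Cref{randomUserAccuracyBound} argument with $\E[\pmat\mid\partialinfo]$ in place of $\pmat$, including the equality case via a column-argmax rule. If anything, your write-up is a bit more careful in isolating exactly where $\partialinfo$-measurability is used and in noting that only nonnegativity of $\bar{\pmat}$ and column-stochasticity of $\amat$ are needed for the bound.
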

\begin{proof}
    Let $\amat \in [0,1]^{n \times m}$ be the matrix-representation of the attacker's $W$-measurable prediction rule $\varphi_R$. 
    With probability one, we have that $\accr(\varphi_R) = \frac{1}{n} \tr(\pmat\amat^\top)$. From this, it follows that
    \[
        \E\bigl[\accr(\varphi_R) \,|\, W\bigr]
        = \E\left[\frac{1}{n} \tr(\pmat\amat^\top) \,\bigg|\, W\right]
        = \frac{1}{n} \tr(\E[\pmat \,|\, W] \amat^\top),
    \]
    where the final equality follows from the fact that $\amat$ is $W$-measurable.
    Next, since $\E[\pmat \,|\, W]$ is a row-stochastic matrix, the same argument as in \Cref{randomUserAccuracyBound}, it follows that $\frac{1}{n} \tr(\E[\pmat \,|\, W]\amat^\top) \leq \| \E[\pmat \,|\, W]\|_{\infty, 1}$, as required. To find the prediction rule $\varphi^*_R$ we use the same argument as in \Cref{randomUserAccuracyBound} and define a matrix $\amat$ such that $\amat[i,o] > 0$ implies $i \in \arg\!\max_j\E[\pmat \,|\, W]$. 
\end{proof}

\section{Matching model accuracy bounds}
In this section, we provide an information theoretic upper bound on the accuracy that any attacker can achieve in the full-information setting. 
Recall that in the random-user setting, we sample a representation matrix $\pmat$ from the distribution $\pmatdist$, the attacker formulates a (possibly randomized) prediction rule $\varphi_M : \ospace^n \to \ispace^n$ based on their knowledge of $\pmat$. 
Then we let $\pi : [n] \to [n]$ be a permutation of $[n]$ chosen uniformly at random, $O_{1:n} \in O^n$ be a vector of independent observations with $O_i \sim \pmat[\pi(i),:]$, and define the accuracy random variable by $\accm(\varphi_M) = \E\left[\frac{1}{n}\sum_{i=1}^n \ind\{\varphi_M^i(O_{1:n}) = \pi(i) \,\bigg|\, \pmat \right]$.
Our accuracy bound for the matching setting is given below:
\begin{lemma}\label{lemma:reidrisk-matching}
    Let $\pmat \sim \pmatdist$ be a random representation matrix and let $\varphi_M : \ospace^n \to \ispace^n$ be the matching rule constructed by an attacker. Then with probability one over $\pmat \sim \pmatdist$, we have
    \[
    \accm(\varphi_M) \leq \frac{m}{n} - \frac{1}{n}\sum_{o \in \ospace} \prod_{i \in \ispace} (1 - \pmat[i,o]).
    \]
\end{lemma}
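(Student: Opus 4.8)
The plan is to fix the realized representation matrix $\pmat$ and bound $\accm(\varphi_M)$ for every such $\pmat$, so that the claimed bound (which is conditional on $\pmat$) holds with probability one. Conditioning further on the observed representation vector $O_{1:n}$ and using the tower rule,
\[
\accm(\varphi_M) = \frac{1}{n}\, \E\Bigl[ \sum_{i=1}^n \Pr\bigl(\varphi_M^i(O_{1:n}) = \pi(i) \,\big|\, O_{1:n}, \pmat \bigr) \,\Big|\, \pmat \Bigr],
\]
so it suffices to show that, for each fixed value of $O_{1:n}$, the inner sum is at most the number of distinct symbols appearing in $O_{1:n}$, and then to take expectations.

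The heart of the argument is a \emph{within-group exchangeability} property of the posterior law of $\pi$. Group the coordinates by observed value: for $o \in \ospace$ put $S_o = \{i : O_i = o\}$ and $k_o = |S_o|$. Given $\pmat$, the joint law of $(\pi, O_{1:n})$ assigns to $\{\pi = \sigma, O_{1:n} = o_{1:n}\}$ mass proportional to $\prod_i \pmat[\sigma(i), o_i]$; since any two coordinates $a, b \in S_o$ carry the identical observation, post-composing $\sigma$ with the transposition $(a\, b)$ leaves this product unchanged. Hence, conditionally on $O_{1:n}$ and $\pmat$, the quantity $\Pr(\pi(i) = j \mid O_{1:n}, \pmat)$ is the same for every $i \in S_o$; call it $\mu_o(j)$. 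Because $\pi$ is a bijection, for fixed $j$ the events $\{\pi(i) = j\}$ with $i \in S_o$ are disjoint, so $k_o\,\mu_o(j) = \sum_{i \in S_o}\Pr(\pi(i) = j \mid O_{1:n}, \pmat) \le 1$, i.e. $\mu_o(j) \le 1/k_o$ for all $j$. Therefore, whatever (possibly randomized) prediction the attacker makes, $\Pr(\varphi_M^i(O_{1:n}) = \pi(i) \mid O_{1:n}, \pmat) \le 1/k_o$ for each $i \in S_o$ — the internal randomness of $\varphi_M$ is independent of $\pi$ given $O_{1:n}$, so averaging over it preserves the bound. Summing over $i \in S_o$ gives a contribution of at most $1$ from each value $o$ that actually appears, so the inner sum is at most $\#\{o \in \ospace : k_o \ge 1\}$, the number of distinct representations observed.

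Taking expectations, $\accm(\varphi_M) \le \frac{1}{n}\,\E\bigl[\,\#\{\text{distinct values in } O_{1:n}\} \mid \pmat\bigr]$, and it remains to evaluate the right-hand side. For a fixed $o$, conditioning on $\pi$ the representations $O_i$ are independent with $\Pr(O_i \ne o \mid \pi) = 1 - \pmat[\pi(i), o]$, so the probability that $o$ does not appear is $\prod_{i} (1 - \pmat[\pi(i), o]) = \prod_{i \in \ispace}(1 - \pmat[i,o])$, which is independent of $\pi$ since the product ranges over all users regardless of the labelling. Hence
\[
\E\bigl[\,\#\{\text{distinct values}\} \mid \pmat\bigr] = \sum_{o \in \ospace}\Bigl(1 - \prod_{i \in \ispace}(1 - \pmat[i,o])\Bigr) = m - \sum_{o \in \ospace}\prod_{i \in \ispace}(1 - \pmat[i,o]),
\]
and dividing by $n$ produces exactly the claimed bound.

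I expect the main obstacle to be the within-group exchangeability step: one must recognize that coordinates sharing a common observation are genuinely interchangeable under the posterior, which is precisely what forces the per-coordinate success probability down to $1/k_o$ and yields the ``$m/n$ minus collision term'' shape rather than a trivial bound of $1$. The remaining pieces — the pointwise reduction to counting distinct symbols, the bookkeeping that randomization and dropping the permutation constraint on $\varphi_M$ do not help, and the inclusion–exclusion evaluation of the expected number of distinct observed symbols — are routine.
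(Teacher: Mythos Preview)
Your proof is correct and follows essentially the same strategy as the paper: bound the conditional success by the number of distinct observed representations via within-group indistinguishability, then compute the expected number of distinct values. The only minor technical difference is that the paper conditions additionally on the unpermuted sample $O^*_{1:n}$ (with $O_i = O^*_{\pi(i)}$), which makes the posterior of $\pi$ explicitly uniform over permutations consistent with the observed multiset and yields $\Pr(\pi(i)=j\mid O_{1:n},O^*_{1:n})=\ind\{O^*_j=o\}/|S_o|$ directly, whereas you obtain the $1/k_o$ bound from the symmetry of the posterior given $O_{1:n}$ alone; both routes lead to the same ``at most one correct guess per observed value'' inequality.
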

\begin{proof}
    Let $\pmat$ be sampled from $\pmatdist$ and $\varphi_M : \ospace^n \to \ispace^n$ be the matching rule constructed by the attacker based on their knowledge of $\pmat$.
    Next, let $O^*_{1:n}$ be independent representations with $O^*_i \sim \pmat[i,:]$, let $\pi : [n] \to [n]$ be a permutation of $[n]$ chosen uniformly at random, and define $O_{1:n}$ by $O_i = O^*_{\pi(i)}$.
    Next, for each observation $o \in \ospace$, let $S_o = \{i \,:\, O_i = o\}$ denote the set of indices $i$ for which $O_i = o$.
    Since $\pi$ is uniformly random and $O_1, \dots, O_n$ are independent with $O_i \sim \pmat[\pi(i), :]$, we have that
    \begin{align*}
    \accm(\varphi_M)
    &= \E\left[
        \frac{1}{n} \sum_{i=1}^n \ind \{\varphi_M^i(O_{1:n}) = \pi(i)\}
        \,\bigg|\,
        \pmat
    \right] \\
    &= \E\left[
        \frac{1}{n} \sum_{o \in \ospace} \sum_{i \in S_o} \Pr(\varphi_M^i(O_{1:n}) = \pi(i) \,|\, O_{1:n}, O^*_{1:n}, \pmat)
        \,\bigg|\,
        \pmat
    \right],
    \end{align*}
    where the second equality follows from breaking the sum over $i$ into a sum over $o \in \ospace$ and $i \in S_o$, and adding an inner expectation conditioned on $O_{1:n}$, $O^*_{1:n}$, and $\pmat$.
    The key idea is that, conditioned on $O_{1:n}$ and $O^*_{1:n}$, the permutation $\pi$ is still random, but $\varphi_M^i(O_{1:n})$ is fixed, which implies that $\varphi_M^i(O_{1:n})$ cannot be correct with too large of a probability.
    For any $i \in S_o$, we have that 
    \[
    \Pr(\pi(i) = j \,|\, O_{1:n}, O^*_{1:n}, \pmat) = \frac{\ind\{O^*_j = o\}}{|S_o|}.
    \]
    With this, we have
    \begin{align*}
    \accm(\varphi_M)
    &= \E\left[
        \frac{1}{n}
        \sum_{o \in \ospace : S_o \neq \emptyset}\,
        \sum_{i \in S_o}
        \frac{\ind\{O^*_{\varphi_M^i(O_{1:n})} = o\}}{|S_o|}
        \,\bigg|\, \pmat
    \right] \\
    &\leq \E\left[
        \frac{1}{n}
        \sum_{o \in \ospace : |S_o| \neq 0} 1
        \,\bigg|\, \pmat        
    \right] 
    = \E\left[\frac{1}{n} \sum_o \ind\{S_o \neq \emptyset \} \,\bigg|\, \pmat \right].
    \end{align*}
    This final expression is the number of unique representations that the attacker observed, divided by the number of users $n$. Intuitively, this bound follows from the fact that for all the users that generated the same observation, the expected number of correct guesses of the attacker is at most one.
    
    To finish the proof, we compute the expected number of distinct representations that the attacker will observe. 
    We have
    \begin{align*}
        \E\left[
        \sum_{o} \ind\{S_o \neq \emptyset\}
        \,\bigg|\, \pmat
        \right]
        = \sum_o \Pr(S_o \neq \emptyset \,|\, \pmat) 
        = \sum_o \left(1 - \prod_{i=1}^n (1 - \pmat[i,o])\right).
    \end{align*}
    which implies the statement of the Lemma.
\end{proof}

\subsection{Connections between the random-user and matching
models}\label{sec:modelConnections}
In this section we study connections between the random-user and matching settings. 
In particular, we show that the matching setting is no harder for the attacker than the random-user setting: we prove that any attacker in the random-user setting can be modified to achieve the same accuracy in the matching setting.
Next, we show that there exist representation probability matrices $\pmat$ such that an optimal attacker in the matching setting can do strictly better than the optimal attacker in the random-user setting.

\begin{lemma}
    Let $\pmat \sim \pmatdist$ and $\varphi_R : \ospace \to \ispace$ be any (possibly randomized) attacker prediction rule for the random-user setting. Define $\varphi_M : \ospace^n \to \ispace^n$ by $\varphi_M(O_{1:n}) = (\varphi_R(O_1), \dots, \varphi_R(O_n))$. Then with probability one over $\pmat$, we have $\accm(\varphi_M) = \accr(\varphi_R)$.
\end{lemma}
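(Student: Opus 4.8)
The plan is to compute $\accm(\varphi_M)$ directly from the definition and reduce it to $\accr(\varphi_R)$ by exploiting the fact that $\varphi_M$ acts coordinatewise and the observations $O_1,\dots,O_n$ are independent. First I would write out, conditioned on $\pmat$,
\[
\accm(\varphi_M) = \E\left[\frac{1}{n}\sum_{i=1}^n \ind\{\varphi_R(O_i) = \pi(i)\} \,\bigg|\, \pmat\right] = \frac{1}{n}\sum_{i=1}^n \Pr(\varphi_R(O_i) = \pi(i) \,|\, \pmat),
\]
where $\pi$ is the uniformly random permutation and $O_i \sim \pmat[\pi(i),:]$. The key observation is that the $i$-th summand depends only on the pair $(\pi(i), O_i)$, and the marginal law of this pair is exactly that of the pair $(I, O)$ from the random-user setting: $\pi(i)$ is uniform on $[n]$ (marginally), and conditioned on $\pi(i) = j$, $O_i \sim \pmat[j,:]$. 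Hence each summand equals $\accr(\varphi_R)$ by Lemma~\ref{randomUserAccuracy} (or directly by the law of total probability), and averaging $n$ identical terms gives the result.

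Concretely, the main steps in order would be: (1) substitute the coordinatewise definition of $\varphi_M$ into the definition of $\accm$ and use linearity of (conditional) expectation to pull the sum outside; (2) for a fixed $i$, condition on the value of $\pi(i)$ and observe $\Pr(\pi(i) = j) = 1/n$ for every $j \in [n]$ since $\pi$ is a uniformly random permutation; (3) note that conditioned on $\pi(i) = j$ and on $\pmat$, the observation $O_i$ is distributed as $\pmat[j,:]$, so $\Pr(\varphi_R(O_i) = \pi(i) \mid \pi(i)=j, \pmat) = \sum_{o} \amat[j,o]\pmat[j,o]$ where $\amat$ is the matrix representation of $\varphi_R$ (note $\amat$ is $\pmat$-measurable in either information setting, which is what makes this manipulation legitimate); (4) average over $j$ to recover $\frac{1}{n}\tr(\pmat\amat^\top) = \accr(\varphi_R)$; (5) conclude that every term in the sum over $i$ equals $\accr(\varphi_R)$, so the average is $\accr(\varphi_R)$ as well. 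The ``probability one over $\pmat$'' qualifier is inherited verbatim from Lemma~\ref{randomUserAccuracy}.

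The proof is essentially routine; the only point requiring slight care is the bookkeeping of what is being conditioned on. The subtlety is that $\varphi_R$ (hence $\amat$) is allowed to depend on $\pmat$ (or on $W$), so when we condition on $\pmat$ the prediction rule is a fixed deterministic-or-randomized map, and its internal randomness is independent of $(\pi, O_{1:n})$; this is exactly the same regime in which Lemma~\ref{randomUserAccuracy} was proved, so no new measurability issue arises. I do not anticipate a genuine obstacle here — the content is simply that matching with a coordinatewise rule decouples into $n$ independent copies of the random-user problem, each with identical success probability.
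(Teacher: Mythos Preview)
Your proposal is correct and follows essentially the same approach as the paper: both arguments use linearity of expectation to reduce to a single coordinate $i$ and then observe that the marginal law of $(\pi(i), O_i)$ coincides with that of $(I,O)$ in the random-user setting. Your version is slightly more explicit in routing through the matrix representation $\amat$ and Lemma~\ref{randomUserAccuracy}, whereas the paper simply states the distributional identity in the final step, but the content is identical.
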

\begin{proof}
    Intuitively, the matching rule $\varphi_M$ applies the random-user rule independently for each representation vector in $O_{1:n}$, and the expected fraction of entries it will predict correctly is equal to the expected accuracy of $\varphi_R$ in the random-user setting. Formally, let $\pi : [n] \to [n]$ be the random permutation used in the matching setting. Then we have
    \begin{flalign*}
        \accm(\varphi_M)
        &= \E\left[ \frac{1}{n} \sum_{i=1}^n \ind\left\{\varphi^i_M(O_{1:n}) = \pi(i) \right\} \,\bigg|\, \pmat \right] \\
        &= \sum_{i=1}^n \frac{1}{n} \Pr( \varphi_M^i(O_{1:n}) = \pi(i) \,|\, \pmat) \\
        &= \sum_{i=1}^n \frac{1}{n} \Pr( \varphi_R(O_i) = \pi(i) \,|\, \pmat) \\
        &= \accr(\varphi_R),
    \end{flalign*}
    where the final equality follows from the fact that the pair $(I,O)$ with $I = \pi(i)$ and $O = O_i$ is distributed identically to $I \sim \operatorname{Uniform}(n)$ and $O \sim \pmat[I,:]$, since when $\pi$ is a random permutation, $\pi(i)$ is randomly chosen uniformly at random from $[n]$.
\end{proof}

Next, we construct a distribution $\pmatdist$ over representation matrices $\pmat$ such that an attacker in the matching setting can have a constant factor higher accuracy than the best attacker in the random-user setting.

\begin{lemma}
    For any even number of users $n$, there exists a representation space $\ospace$ of size $m = \frac{3n}{2}$ and a distribution $\pmatdist$ over representation matrices $\pmat \in \reals^{n \times m}$ such that: with probability one, every $\varphi_R : \ospace \to \ispace$ has $\accr(\varphi_R) \leq \frac{3}{4}$ in the random-user setting, and there exists a rule $\varphi_M : \ospace^n \to \ispace^n$ such that $\accm(\varphi_M) = \frac{7}{8}$ in the matching setting.
\end{lemma}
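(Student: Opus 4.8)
The plan is to exhibit a single deterministic representation matrix and let $\pmatdist$ be the point mass on it. Partition the $n$ users into $n/2$ pairs $\{2j-1,2j\}$, $j\in[n/2]$. The representation space $\ospace$ has $m=3n/2$ elements: one \emph{private} symbol $p_i$ for each user $i\in[n]$, and one \emph{shared} symbol $s_j$ for each pair $j\in[n/2]$. Set $\pmat[i,p_i]=1/2$ and $\pmat[i,s_{\lceil i/2\rceil}]=1/2$, with all other entries zero; this is row-stochastic, hence a valid representation matrix. Intuitively, each user reveals a perfectly identifying private symbol with probability $1/2$ and an ambiguous pair-level symbol with probability $1/2$.

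For the random-user bound I would simply compute $\|\pmat\|_{\infty,1}=\sum_o\max_i\pmat[i,o]$: every private column $p_i$ has maximum entry $1/2$, and every shared column $s_j$ has its two nonzero entries equal to $1/2$, so its maximum is also $1/2$. There are $m=3n/2$ columns, hence $\|\pmat\|_{\infty,1}=3n/4$, and \Cref{randomUserAccuracyBound} gives $\accr(\varphi_R)\le\frac1n\cdot\frac{3n}{4}=\frac34$ for every $\varphi_R$, with probability one (indeed always, since $\pmatdist$ is a point mass).

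For the matching lower bound I would construct the natural attacker $\varphi_M$: each observed symbol is ``tagged'' by its pair (any of $p_{2j-1},p_{2j},s_j$ reveals pair $j$), and each pair $j$ owns exactly two slots in $O_{1:n}$, so the attacker can treat the pairs independently. Within pair $j$: if a slot shows a private symbol $p_i$, output $i$; if exactly one of the two pair-$j$ slots shows $s_j$ (and the other shows some $p_i$), output the remaining pair member on the $s_j$ slot; if both pair-$j$ slots show $s_j$, assign the two pair members to the two slots by an arbitrary fixed tie-break. Conditioning on $\pmat$ and on $O_{1:n}$, the two users of a pair are exchangeable precisely when both produced $s_j$, so the four equally likely local outcomes $(\text{priv},\text{priv})$, $(\text{priv},\text{shared})$, $(\text{shared},\text{priv})$, $(\text{shared},\text{shared})$ contribute $2,2,2,1$ correct identities in expectation respectively; each pair thus contributes $\frac14(2+2+2+1)=\frac74$, and summing over the $n/2$ pairs gives $\accm(\varphi_M)=\frac1n\cdot\frac n2\cdot\frac74=\frac78$.

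The main obstacle is the rigorous accuracy computation for $\varphi_M$: one must argue carefully that, once the full observation vector $O_{1:n}$ is revealed, the only residual randomness relevant to a ``both-shared'' pair is which of the two pair members landed in which of its two slots, and that this is uniform (by exchangeability of the two users in that event together with the uniformity of $\pi$). This is exactly the conditioning device already used in the proof of \Cref{lemma:reidrisk-matching}, so I would mirror that structure---conditioning on $O_{1:n}$ together with the users' realized symbols---rather than reasoning about $\pi$ directly. As a sanity check (and an alternative route to the upper half of the statement), plugging $\pmat$ into \Cref{lemma:reidrisk-matching} yields the bound $\accm\le\frac32-\frac58=\frac78$, confirming that the constructed $\varphi_M$ is in fact optimal for this instance.
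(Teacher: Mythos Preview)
Your proposal is correct and follows essentially the same approach as the paper: the construction (point mass on a block-diagonal matrix built from $n/2$ independent two-user gadgets with one private symbol per user and one shared symbol per pair, all with weight $1/2$) is identical, as is the $\|\pmat\|_{\infty,1}$ computation for the random-user bound. Your matching analysis via the four equally likely per-pair outcomes is a slightly more explicit version of the paper's ``at least one unique vs.\ both ambiguous'' case split, and your sanity check against \Cref{lemma:reidrisk-matching} is a nice addition the paper omits.
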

\begin{proof}
    For simplicity, we construct $\pmatdist$ as a distribution supported on a single representation matrix $\pmat$.
    First, consider the case where we have only $n=2$ users, the representation space is $\ospace = \{u_1, u_2, a\}$, and the representation probability matrix is defined by
    \[
    \pmat = \begin{blockarray}{cccc}
        & u_1 & u_2 & a \\
        \begin{block}{c(ccc)}
        \text{User 1} & 1/2 & 0 & 1/2 \\
        \text{User 2} & 0 & 1/2 & 1/2 \\
        \end{block}
    \end{blockarray}.
    \]
    We have $\frac{1}{n}\Vert \pmat \Vert_{\infty,1} = 3/4$, and it follows that no prediction rule $\varphi_R$ can achieve accuracy higher than $3/4$ in the random-user setting. However, in the matching setting, if the attacker observes at least one of $\{u_1, u_2\}$, this is sufficient for perfectly matching the users, since $u_i$ is only ever generated by user $i$. When both users generate the ambiguous representation $a$, the attacker still has a $1/2$ chance to correctly identify the users (e.g., by predicting a random permutation). The probability that both users generate representation $a$ is $1/4$, and the probability that at least one of $u_1$ or $u_2$ is generated is $3/4$. It follows an attacker in the matching setting can achieve:
    \begin{align*}
    \accm(\varphi_M) 
    &= 1 \cdot \Pr(\text{$u_1$ or $u_2$ observed}) + \frac{1}{2} \cdot \Pr(\text{only $a$ observed})\\
    &= 1\cdot\frac{3}{4} + \frac{1}{2} \cdot \frac{1}{4}
    = \frac{7}{8}.
    \end{align*}

    To extend this example to any even number of users, we create $n/2$ copies of the 2-user problem as follows: Let $\ospace = \{u_1, \dots, u_n\} \cup \{a_1, \dots, a_{n/2}\}$ and define the representation probability matrix by
    \[
    \pmat[i,o] = \begin{cases}
        1/2 & \text{if $o = u_i$ or $o = a_{\lceil i/2 \rceil}$.} \\
        0 & \text{otherwise}.
    \end{cases}
    \]
    Then $\pmat$ has $\frac{3n}{2}$ columns and the maximum value in each column is $1/2$. It follows that $\frac{1}{n} \Vert\pmat\Vert_{\infty,1} = \frac{1}{n}\cdot \frac{3n}{2}\cdot \frac{1}{2} = \frac{3}{4}$. On the other hand, for any even index $i$, we know that there are exactly two entries in $O_{1:n}$ in the set $\{u_{i-1}, u_i, a_{i/2}\}$, and that these entries must correspond to users $i-1$ and $i$ (but we do not know the order). When the attacker attempts to identify users $i-1$ and $i$, they are faced exactly with the two-user problem described above, and their expected accuracy for users $i-1$ and $i$ is $7/8$. Averaging over the $n/2$ pairs of users, their overall accuracy is also $7/8$.
\end{proof}

\section{Relation to other privacy notions}
\label{sec:comparison}
In this section we give a detailed discussion of the re-identification risk introduced in \Cref{sec:model} in relation to two prior notions of algorithmic privacy: local differential privacy and $k$-anonymity. 
We show that (for appropriate parameters) both of these privacy notions are 
\emph{sufficient} to imply low re-identification risk, but neither condition is \emph{necessary} to obtain low re-identification risk in our framework. (We refer however to the discussion in Section~\ref{sec:limits} on why they may still be needed for other privacy risks.)

This shows that the re-identification risk outlined in \Cref{sec:model} is not entirely captured by either of these concepts. We conclude this section by discussing the connection between our work and the field of quantitative information flow (QIF) \cite{qif}.

\subsection{Local differential privacy}
Local differential privacy (LDP)~\cite{cormode2018privacy} is a strong privacy notion applicable to publishing user representations constructed from private information. Intuitively, it should be hard to derive the identity of a user from the output of a differentially private mechanism. In this section we prove this implication, while, at the same time, showing that local differential privacy is \emph{not necessary} for low re-identification risk. This result highlights the ability of our framework to characterize directly and sharply re-identification risks. %

\begin{definition} [Local differential privacy]
Let $\mathcal{X}$ be an arbitrary space encoding user information. A randomized algorithm $\mathcal{A} : \mathcal{X} \to \ospace$ for mapping user data to a representation satisfies $(\epsilon, \delta)$-LDP if the following holds: for all $x, x' \in \mathcal{X}$ and any set of representations $E \subset \ospace$, we have that
\[
\Pr(\mathcal{A}(x) \in E) \leq e^\epsilon \cdot \Pr(\mathcal{A}(x') \in E) + \delta.
\]
\end{definition}

LDP representations as described above can be modeled under our framework as follows:
Let $\pmatdist$ be a distribution over representation matrices that samples $\pmat$ in two steps: 
First, the $n$ users generate their data $x_1$, \dots, $x_n \in \mathcal{X}$.
Second, we define $\pmat \in [0,1]^{n \times m}$ to have entries $\pmat[i,o] = \Pr(\mathcal{A}(x_i) = o)$, where the probability is only over the randomness of the mechanism $\mathcal{A}$.
Sampling $\pmat$ from $\pmatdist$ corresponds to the process generating the users data, while the matrix $\pmat$ encodes the mechanism $\mathcal{A}$'s output distribution for each user.
Since $\mathcal{A}$ is $(\epsilon,\delta)$-LDP, with probability one over the draw of $\pmat$, we have that for any users $i$ and $j$, and any representation subset $E \subset \ospace$, the following holds: 
\begin{equation}\label{eq:dpdist}
\sum_{o \in E} \pmat[i,o] \leq \delta + e^\epsilon \cdot \sum_{o \in E} \pmat[j,o].
\end{equation}
More generally, we say that any distribution $\pmatdist$ over representation matrices $\pmat$ that satisfy \eqref{eq:dpdist} with probability one is $(\epsilon,\delta)$-LDP.

The following result shows that $(\epsilon,\delta)$-LDP implies low re-identification accuracy in the random-user setting.
\begin{lemma}{\label{lemma:dpsuff}}
    Let $\pmatdist$ be any distribution that satisfies $(\epsilon, \delta)$-LDP, let $\pmat \sim \pmatdist$, and $\varphi_R$ be an attacker's prediction rule in the random user setting. 
    Then
    \[
    \accr(\varphi_R) \leq \frac{e^\epsilon + \min(n,m)\cdot \delta}{n}.
    \]
\end{lemma}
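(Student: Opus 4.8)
The plan is to combine the column-wise accuracy bound of \Cref{randomUserAccuracyBound} with the LDP constraint \eqref{eq:dpdist}, applied at two different granularities to produce the two terms inside the minimum. First I would invoke \Cref{randomUserAccuracyBound}: with probability one over $\pmat \sim \pmatdist$, any prediction rule satisfies $\accr(\varphi_R) \le \frac{1}{n}\|\pmat\|_{\infty,1} = \frac{1}{n}\sum_{o \in \ospace}\max_{i \in \ispace}\pmat[i,o]$. Hence it suffices to show that, with probability one, $\sum_{o\in\ospace}\max_i \pmat[i,o] \le e^\epsilon + \min(n,m)\,\delta$, and I would prove the two bounds $e^\epsilon + m\delta$ and $e^\epsilon + n\delta$ separately and take the better one.

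For the $e^\epsilon + m\delta$ bound I would work one column at a time: fix $o \in \ospace$, let $i^\star \in \argmax_i \pmat[i,o]$, and apply \eqref{eq:dpdist} with the singleton set $E = \{o\}$ to get $\pmat[i^\star,o] \le \delta + e^\epsilon \pmat[j,o]$ for every user $j$. Averaging this over $j \in [n]$ bounds $\max_i \pmat[i,o]$ by $\delta + \frac{e^\epsilon}{n}\sum_j \pmat[j,o]$; summing over all $o$ and using that $\pmat$ is row-stochastic (so $\sum_o \sum_j \pmat[j,o] = n$) yields $\sum_o \max_i \pmat[i,o] \le m\delta + e^\epsilon$. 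For the $e^\epsilon + n\delta$ bound the idea is to apply \eqref{eq:dpdist} once per ``maximizing-user class'' rather than once per column: break ties to fix a map $o \mapsto i^\star(o) \in \argmax_i \pmat[i,o]$ and set $O_i = \{o : i^\star(o) = i\}$, so that $\{O_1,\dots,O_n\}$ partitions $\ospace$. Applying \eqref{eq:dpdist} with $E = O_i$ gives $\sum_{o \in O_i}\max_j \pmat[j,o] = \sum_{o \in O_i}\pmat[i,o] \le \delta + e^\epsilon \sum_{o \in O_i}\pmat[j,o]$ for every $j$; averaging over $j \in [n]$, summing over $i \in [n]$, and again using $\sum_o\sum_j \pmat[j,o] = n$ gives $\sum_o \max_i \pmat[i,o] \le n\delta + e^\epsilon$. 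Combining the two estimates and dividing by $n$ gives the claimed bound.

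The only step that is not pure bookkeeping is the second bound: one has to notice that partitioning the columns by their maximizing user produces at most $n$ blocks, so that spending a ``$+\delta$'' slack once per block (via the group-privacy-flavored inequality \eqref{eq:dpdist} applied to the block $O_i$) costs only $n\delta$ total, in contrast to the $m\delta$ one obtains from treating each of the $m$ columns individually. The rest — the reduction to $\|\pmat\|_{\infty,1}$ and the two averaging arguments — is routine manipulation using row-stochasticity of $\pmat$.
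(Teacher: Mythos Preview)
Your proof is correct and follows essentially the same approach as the paper: reduce to bounding $\|\pmat\|_{\infty,1}$ via \Cref{randomUserAccuracyBound}, then derive the $e^\epsilon + m\delta$ bound by applying \eqref{eq:dpdist} columnwise and the $e^\epsilon + n\delta$ bound by partitioning $\ospace$ according to the maximizing user and applying \eqref{eq:dpdist} blockwise. The only cosmetic difference is that the paper compares each maximizer to the fixed reference row $\pmat[1,\cdot]$ rather than averaging over all rows $j$; since \eqref{eq:dpdist} holds for every $j$, both choices yield the same bound.
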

\begin{proof}
    Let $\pmat \sim \pmatdist$ be a sampled representation matrix and partition $\ospace$ into sets $\ospace_1, \dots, \ospace_n$, where $\ospace_i$ contains all representations that user $i$ generates with higher probability than any other user with ties broken in favor of the user with lower index. 
    That is,
    \begin{align*}
    \ospace_i = \{ o \in \ospace \,:\, \text{for all } j \neq i, \pmat[i,o] \geq \pmat[j,o] \text{and if } \pmat[i,o] = \pmat[j,o] \text{ then } i < j \}.
    \end{align*}
    Then we have that
    \begin{equation}\label{eq:inf1normByUser}
    \|\pmat\|_{\infty,1}
    = \sum_{o \in \ospace} \max_{i \in \ispace} \pmat[i,o]
    = \sum_{i \in \ispace} \sum_{o \in \ospace_i} \pmat[i,o].
    \end{equation}

    Now suppose that $\pmatdist$ is $(\epsilon,\delta)$-LDP. 
    Then we have that:
    \[
    \|\pmat\|_{\infty,1} = 
    \sum_{o \in \ospace} \max_{i \in \ispace} \pmat[i,o]
    \leq \sum_{o \in \ospace} (e^\epsilon \pmat[1,o] + \delta)
    = e^\epsilon +  m \delta.
    \]
    At the same time, from \eqref{eq:inf1normByUser} we have that
    \[
    \|\pmat\|_{\infty,1} 
    = \sum_{i \in \ispace} \sum_{o \in \ospace_i} \pmat[i,o]
    \leq \sum_{i \in \ispace} \left(\delta + e^\epsilon \sum_{o \in \ospace_i} \pmat[1,o]\right)
    = e^\epsilon + n\delta.
    \]
    The above arguments show that $\|\pmat\|_{\infty,1} \leq e^\epsilon + \min(n,m)\cdot \delta$. 
    From \Cref{randomUserAccuracyBound}, it follows that $\accr(\varphi_R) \leq \frac{e^\epsilon + \min(n,m)\delta}{n}$.
\end{proof}

\subsection{k-anonymity}
A process that releases anonymized data about a collection of users is said to be $k$-anonymous if the information released for each user cannot be distinguished from at least $k-1$ other users who also appear in the release. 
We say that a distribution $\pmatdist$ over representation matrices is $k$-anonymous if, with probability one over $\pmat \sim \pmatdist$, every row of $\pmat$ is a one-hot vector and appears at least $k$ times.
That is, each user $i$ is assigned a representation $o_i \in \ospace$ that is shared with at least $k-1$ other users, and their row of $\pmat$ is given by $\pmat[i, o] = \ind\{o = o_i\}$.

The following result shows that $k$-anonymity is sufficient to limit an attacker's accuracy to $1/k$ in the random-user setting.

\begin{lemma}
Let $\pmatdist$ be any distribution that satisfies $k$-anonymity, let $\pmat \sim \pmatdist$, and $\varphi_R$ be an attacker's prediction rule in the random user setting. Then
\[
\accr(\varphi_R) \leq \frac{1}{k}.
\]
\end{lemma}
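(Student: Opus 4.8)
The plan is to reduce the claim to the general upper bound of \Cref{randomUserAccuracyBound}, namely $\accr(\varphi_R) \leq \frac{1}{n}\|\pmat\|_{\infty,1}$, which holds with probability one. So it suffices to show that, with probability one over $\pmat \sim \pmatdist$, we have $\|\pmat\|_{\infty,1} \leq \frac{n}{k}$.

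First I would use the structure of $k$-anonymity to give a clean interpretation of $\|\pmat\|_{\infty,1}$. By definition of $k$-anonymity, with probability one every row $\pmat[i,:]$ is a one-hot vector $\ind\{o = o_i\}$ for some $o_i \in \ospace$, so each entry $\pmat[i,o]$ is either $0$ or $1$. Hence for each column $o \in \ospace$, $\max_{i \in \ispace}\pmat[i,o]$ equals $1$ if some user is assigned representation $o$ and $0$ otherwise, and therefore
\[
\|\pmat\|_{\infty,1} = \sum_{o \in \ospace}\max_{i \in \ispace}\pmat[i,o] = \bigl|\{o \in \ospace : o = o_i \text{ for some } i \in \ispace\}\bigr| =: |R|,
\]
the number of distinct representations that actually appear among the users.

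Next I would bound $|R|$ by a counting argument. The users $\ispace = [n]$ are partitioned by the map $i \mapsto o_i$ into the nonempty fibers indexed by $R$; by the $k$-anonymity assumption each such fiber $\{i : o_i = o\}$ has size at least $k$. Summing fiber sizes gives $n = \sum_{o \in R}|\{i : o_i = o\}| \geq k\,|R|$, so $|R| \leq n/k$. Combining with the previous display, $\|\pmat\|_{\infty,1} \leq n/k$ with probability one, and then \Cref{randomUserAccuracyBound} yields $\accr(\varphi_R) \leq \frac{1}{n}\|\pmat\|_{\infty,1} \leq \frac{1}{k}$, as desired.

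There is no real obstacle here; the only point requiring care is the observation that, under $k$-anonymity, $\|\pmat\|_{\infty,1}$ collapses to the count of distinct one-hot rows, after which the bound is a one-line pigeonhole estimate. I would also note that this argument is distribution-free in the sense that it holds pathwise for every $\pmat$ in the support of $\pmatdist$, so no further handling of the randomness in $\pmatdist$ is needed beyond the "with probability one" qualifier inherited from the $k$-anonymity definition.
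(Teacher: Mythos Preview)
Your proposal is correct and follows essentially the same argument as the paper: both reduce to \Cref{randomUserAccuracyBound}, observe that under $k$-anonymity $\|\pmat\|_{\infty,1}$ equals the number of distinct representations actually used, and then bound this count by $n/k$ via the pigeonhole argument that each used representation is shared by at least $k$ users. The only difference is notation (your fibers and set $R$ versus the paper's sets $\ispace_o$), not substance.
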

\begin{proof}

    Let $\pmat \sim \pmatdist$ and $o_1, \dots, o_n \in \ospace$ be the corresponding representations assigned to each user.
    For each representation $o~\in~\ospace$, let $\ispace_o = \{i \in \ispace \mid o_i = o\}$ denote the set of users assigned that representation.
    From the $k$-anonymity condition, we are guaranteed that either $|\ispace_o| = 0$ or $|\ispace_o| \geq k$.
    It follows that there are at most $n/k$ observations $o$ for which $|\ispace_o| > 0$.
    This implies that
    \[
    \|\pmat\|_{\infty, 1}
    = \sum_{o \in \ospace} \max_{i \in \ispace} \pmat[i,o]
    = \sum_{o \in \ospace} \ind\{|\ispace_o| > 0\}
    \leq \frac{n}{k}.
    \]
    By \Cref{randomUserAccuracyBound}, it follows that $\acc(\varphi_R) \leq \frac{1}{k}$, as required.
   
\end{proof}

\subsection{LDP and k-anonymity are not necessary conditions for low re-identification risk}

In the previous two subsections we showed that, for appropriate parameter settings, $(\epsilon,\delta)$-LDP and $k$-anonymity both imply that an attacker in the random-user setting has low accuracy. In this section, we show that neither condition is necessary.

\begin{lemma}\label{lem:dpKanonNotNecessary}
    There exist distributions $\pmatdist$ such that with probability one, every attacker has $\accr(\varphi_R) \leq \frac{2}{n}$ and $\pmatdist$ is not $(\epsilon,\delta)$-LDP unless $\delta = 1$ and not $k$-anonymous for any $k$.
\end{lemma}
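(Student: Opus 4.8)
The plan is to exhibit a single explicit representation matrix $\pmat$, take $\pmatdist$ to be the point mass on it, and then verify the three required properties independently: a small value of $\|\pmat\|_{\infty,1}$ (which yields the accuracy bound through \Cref{randomUserAccuracyBound}), the existence of a pair of users with disjoint supports (which rules out $(\epsilon,\delta)$-LDP for any $\delta<1$), and the presence of at least one non-one-hot row (which rules out $k$-anonymity for every $k$).

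Concretely, fix $m\ge 3$ and let $\ospace=\{o_1,\dots,o_m\}$. I would place all of user $1$'s probability mass on $o_1$, i.e.\ $\pmat[1,o_1]=1$ and $\pmat[1,o_j]=0$ for $j\ge 2$, and make every other user uniform over the remaining $m-1$ representations, i.e.\ $\pmat[i,o_1]=0$ and $\pmat[i,o_j]=\tfrac{1}{m-1}$ for $i\ge 2$ and $j\ge 2$. The point is that $\|\pmat\|_{\infty,1}=\sum_{o\in\ospace}\max_{i}\pmat[i,o]$ only sees column maxima: column $o_1$ contributes $1$, and each of the other $m-1$ columns contributes $\tfrac{1}{m-1}$, for a total of exactly $2$. \Cref{randomUserAccuracyBound} then gives $\accr(\varphi_R)\le\tfrac1n\|\pmat\|_{\infty,1}=\tfrac2n$ for every prediction rule $\varphi_R$, with probability one.

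For the LDP failure, observe that users $1$ and $2$ have disjoint supports: taking $E=\{o_1\}$ we have $\sum_{o\in E}\pmat[1,o]=1$ while $\sum_{o\in E}\pmat[2,o]=0$, so \eqref{eq:dpdist} forces $1\le\delta+e^{\epsilon}\cdot 0$, i.e.\ $\delta\ge 1$, irrespective of $\epsilon$; hence $\pmatdist$ is not $(\epsilon,\delta)$-LDP unless $\delta=1$. For the $k$-anonymity failure, recall that $k$-anonymity of $\pmatdist$ requires (with probability one) every row of $\pmat$ to be a one-hot vector; already $1$-anonymity fails here since row $2$ has $m-1\ge 2$ nonzero entries, and therefore $\pmatdist$ is not $k$-anonymous for any $k$.

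I do not expect a genuine obstacle here beyond finding a construction that threads all three constraints simultaneously; the one point worth stating carefully is why a single ``loud'' user together with many mutually overlapping ``quiet'' users keeps $\|\pmat\|_{\infty,1}$ equal to $2$ --- user $1$ contributes $1$ to only one column, and users $2,\dots,n$, being identically uniform over a shared block, collectively add only $1$ more across all remaining columns --- while it is precisely the loudness (disjoint support) of user $1$ that destroys LDP and precisely the spread of the quiet users that destroys anonymity. (When $n\le 2$ the claimed bound $2/n\ge 1$ is vacuous, and the same matrix still satisfies the two structural properties, so the statement holds for all $n$.)
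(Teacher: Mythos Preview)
Your argument is correct and follows the same template as the paper's proof---exhibit a single matrix $\pmat$, take $\pmatdist$ to be a point mass, compute $\|\pmat\|_{\infty,1}=2$, and point to one disjoint-support pair and one non-one-hot row---but the concrete construction differs. The paper works with $m=2$ and sets $\pmat[i,1]=1-\tfrac{i-1}{n-1}$, $\pmat[i,2]=\tfrac{i-1}{n-1}$, so that all users lie on a gradient between two pure endpoints (users $1$ and $n$), whereas you single out one ``loud'' user and make everyone else identically uniform on a disjoint block of size $m-1\ge 2$. Both reach $\|\pmat\|_{\infty,1}=2$ for the same reason (two disjoint probability-one blocks in the column-max profile); the paper's version is more economical in $|\ospace|$, while yours makes the disjoint-support pair and the non-one-hot row maximally explicit. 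One small caveat: your parenthetical about $n\le 2$ is fine for $n=2$, but for $n=1$ your matrix collapses to the single one-hot row of user~$1$ and the LDP and $k$-anonymity failures no longer hold; the paper's construction is likewise undefined at $n=1$, so both proofs should be read for $n\ge 2$.
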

\begin{proof}
    Let $\ospace = \{1,2\}$ and let $\pmat$ have entries given as follows: for each user $i \in [n]$, define
    \[
    \pmat[i,1] = 1 - \frac{(i-1)}{n-1} 
    \qquad\text{and}\qquad 
    \pmat[i,2] = \frac{(i-1)}{n-1}.
    \]
    Not all of the rows of $\pmat$ are one-hot, so it does not satisfy the $k$-anonymity requirement. 
    Next, we have that $\pmat[1,2] = 0$ while $\pmat[n,2] = 1$, which implies that $\pmat$ only satisfies the $(\epsilon,\delta)$-LDP constraint when $\delta = 1$. 
    Finally, we have that $\|\pmat\|_{\infty,1} = 2$ and by \Cref{randomUserAccuracyBound} it follows that $\accr(\varphi_R) \leq \frac{2}{n}$.
\end{proof}

\subsection{Mutual Information}
Another view of re-identification risk can be obtained from the field of information theory. Given a joint pair of random variables $(X, Y)$,  we are interested in measuring how much \emph{information} does $Y$ encode about $X$. For our random user model, this can be translated to measuring how much information the representation $O$ provides about the identity random variable $I$. This concept is formalized by the conditional mutual information \cite{elementsofit} $\text{MI}(I; O| \pmat)$  for the full information scenario and by $\text{MI}(I; O | W)$ for the partial-information scenario. This metric was in fact used in prior work~\cite{topics-explainer} to quantify the re-identification risk of the Topics API.
One can use the celebrated Fano's inequality \cite{elementsofit} to show that.
\begin{lemma}
Under the random-user model we have
\begin{align}
    \accr(\varphi_R)\le \frac{1+\text{MI}(I;O|\pmat)}{\log(n)}\,,
\end{align}

\end{lemma}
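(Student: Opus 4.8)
The plan is to invoke Fano's inequality in its standard form and then translate the resulting bound on error probability into the stated bound on $\accr(\varphi_R)$. Recall that Fano's inequality, applied to the task of predicting $I$ from $O$ (conditioned on $\pmat$), states that for any estimator $\hat I = \varphi_R(O)$,
\[
H(I \mid O, \pmat) \leq 1 + P_e \cdot \log(n-1) \leq 1 + P_e \cdot \log n,
\]
where $P_e = \Pr(\varphi_R(O) \neq I \mid \pmat)$ is the conditional error probability and we have used that $I$ takes values in $[n]$. Since $I$ is uniform on $[n]$, we have $H(I \mid \pmat) = \log n$, and by the definition of conditional mutual information $\text{MI}(I; O \mid \pmat) = H(I \mid \pmat) - H(I \mid O, \pmat) = \log n - H(I \mid O, \pmat)$, so $H(I \mid O, \pmat) = \log n - \text{MI}(I; O \mid \pmat)$.

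Substituting this into the Fano bound gives $\log n - \text{MI}(I; O \mid \pmat) \leq 1 + P_e \log n$, and rearranging yields $P_e \geq 1 - \frac{1 + \text{MI}(I; O \mid \pmat)}{\log n}$. Since $\accr(\varphi_R) = \Pr(\varphi_R(O) = I \mid \pmat) = 1 - P_e$, this immediately gives $\accr(\varphi_R) \leq \frac{1 + \text{MI}(I; O \mid \pmat)}{\log n}$, which is the claim. One caveat to handle carefully: the quantities $\accr(\varphi_R)$, $\text{MI}(I; O \mid \pmat)$, $H(I \mid O, \pmat)$, and $P_e$ are all functions of the random matrix $\pmat$ (equivalently, conditional on $\pmat$), so the argument should be read as holding pointwise for (almost) every fixed realization of $\pmat$, or one should note the standard convention that conditioning on $\pmat$ means we are working in the conditional probability space; either way the inequality holds with probability one over $\pmat \sim \pmatdist$. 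A second minor point is the $\log(n-1)$ versus $\log n$ slack in Fano; using $\log n$ only weakens the bound, which is fine and matches the statement.

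I do not anticipate a genuine obstacle here, since this is essentially a direct citation of Fano's inequality plus the observation that $I$ is uniform. The only place requiring a bit of care is making the conditioning on $\pmat$ rigorous — Fano is usually stated without an auxiliary conditioning variable, so I would either cite a version of Fano that allows side information (conditioning everything on $\pmat$ throughout) or simply note that for each fixed $\pmat$ the ordinary Fano inequality applies to the induced joint distribution of $(I, O)$. If one wanted the partial-information analogue $\accr(\varphi_R) \leq \frac{1 + \text{MI}(I; O \mid W)}{\log n}$ for $W$-measurable rules, the identical argument goes through with $W$ in place of $\pmat$, using $H(I \mid W) = \log n$ since $I$ is independent of $W$ (the partial information $W$ is generated symmetrically across users before the random user is drawn).
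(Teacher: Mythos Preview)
Your proposal is correct and is exactly the argument the paper has in mind: the paper does not actually give a proof, it merely says ``One can use the celebrated Fano's inequality to show that'' and states the bound. Your write-up fills in precisely the standard Fano computation (uniform prior, $H(I\mid\pmat)=\log n$, bound $H(P_e)\le 1$, then rearrange), and your remarks about interpreting the conditioning on $\pmat$ pointwise are appropriate and consistent with how the paper treats $\accr(\varphi_R)$ as a $\pmat$-measurable random variable.
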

It is worth noticing that the dependency on the number of users $n$ here is logarithmic as opposed to that of Lemma~\ref{randomUserAccuracyBound} where the dependency is linear. This is an exponential improvement and demonstrates that our framework can better capture re-identification risks.

\subsection{Quantitative information flow}
Quantitative information flow \cite{qifsmith, qif} (QIF) is a different framework for analyzing the privacy vulnerability of a system. QIF is specified by a space of secrets $\mathcal{S}$, an output space $\ospace$, a (possibly randomized) channel $C \colon \mathcal{S} \to \ospace$ assumed to be known to an adversary, and a gain function $g \colon \mathcal{W} \times \mathcal{S} \to \mathbb{R}$, where $\mathcal{W}$ is an adversary's space of strategies that may coincide with $\mathcal{S}$ depending on the scenario.

QIF assumes there is a secret $s$ sampled from a known distribution $\pi$ and that the adversary observes $o = C(s)$. The goal of the adversary is, given $o$, to learn about $s$. The gain function may then capture the reward of an adversary predicting secret $s'$, here a reward function $r(s', s)$. Note that an adversary with access to the channel, given an output $o$, can always predict the secret $s'$ that maximizes their posterior reward 
$$R(o) = \max_{s'}\sum_{s \in \mathcal{S}} r(s', s)P(s | o).$$

The privacy vulnerability of a channel may be seen in QIF as the expected posterior reward $\E_\pi[R(o)]$. For our full-information setting, the known representation matrix $\pmat$ corresponds to the channel, the identity space $\ispace$ is the secret space and the reward function $r(s', s) = 1$ if $s' = s$ and it is $0$ otherwise. That is, the adversary is only rewarded if they predict the correct user in one try. It is known \cite{qif} that for this scenario the expected posterior reward corresponds to the so-called Bayes vulnerability and it satisfies:
\begin{equation*}
    \E[R(o)] = \frac{1}{n}\|\pmat\|_{\infty, 1}
\end{equation*}

That is, our full-information setting is an alternative formulation of QIF as a hypothesis testing framework. To the best of our knowledge, we are not aware of a partial information QIF formulation that fully matches the random-user or matching scenarios although we are actively exploring ways to use advanced concepts in QIF such as the internal fixed-probability choice model \cite{qif} to establish a similar connection.

\section{Case Study: The Topics API}\label{sec:case-study}
As mentioned in the introduction, we will use our framework to provide an analysis of the re-identification risk in context of the Topics API~\cite{topics-github} of the Privacy Sandbox~\cite{privacysandbox}. Here, we introduce the Topics API using the framework of section~\ref{sec:model}.

The Privacy Sandbox~\cite{privacysandbox} is a series of proposals to enable online advertising while  limiting cross-site tracking on the web. We will focus on Interest Based Advertising (IBA) use case of the Privacy Sandbox. IBA is a sector of online advertising in which ad-tech providers build models of the users' interests in an effort to show them relevant ads. For instance, people interested in a car may be served car ads even on unrelated pages. 

Historically, IBA has been enabled through third-party cookies. These serve as a cross-site user identifier, allowing ad techs to keep track of the sites a user has visited and build an interest profile based on their browsing history. This cross-site tracking is in direct contrast with the goals of the Privacy Sandbox, which has led Chrome to announce the Topics API to support IBA without relying on cross-site tracking.

The Topics API works as follows (we refer to the specifications in~\cite{topics-github}): every week the {\em browser} builds an interest profile of the user, in the form of selecting top five topics from a fixed topics taxonomy, $\topics$. Importantly, this profile is kept on the browsers and is not shared with others. 

Whenever a website wants to show an ad, the browser shares a topic selected uniformly at random from one of the top 5 topics in the profile of the previous week with the ad tech provider (additionally with some probability $p$, it may simply return a uniformly at random topic from $\topics$). Crucially, for every user, the topic sampled for a website is fixed for an entire week, and the samples on two different websites are independent.\footnote{The API actually returns, on top of the current sampled topic, a cached result for the output of the previous 2 weeks for the caller. We omit this detail from the modeling as observing $r$ consecutive weeks of Topics, simply corresponds to performing calls for $r+2$ weeks in our model.} See Figure~\ref{fig:cookievtopics} for an example of the Topics API. A detailed specification of the API can also be found in Algorithm~\ref{alg:topics}.

\begin{algorithm}
\caption{Topics API. 
\newline {\bf Input:} Topics $\topics$, probability to return a random
topic $p$. }
\label{alg:topics}
\begin{algorithmic}
\STATE {\bf On device:} Select set $S$ of top 5 most popular topics for this client.
\STATE \textbf{On call GetTopic()} from website $w$ on week $s$ and user $u$:
\STATE Seed the random number generator with $w, s, u$.
\STATE Flip coin with heads probability $p$.
\IF{Heads}
\RETURN Element of $\topics$ chosen u.a.r. 
\ELSE
\RETURN Element of $S$ chosen u.a.r. 
\ENDIF
\end{algorithmic}
\end{algorithm}

\begin{figure}
    \centering
    \begin{tabular}{c|c}
     \tiny{(a)} \includegraphics[width=.46\linewidth]{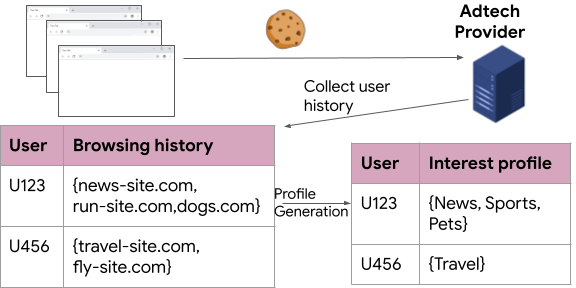} & 
    \tiny{(b)}\includegraphics[width=.46\linewidth]{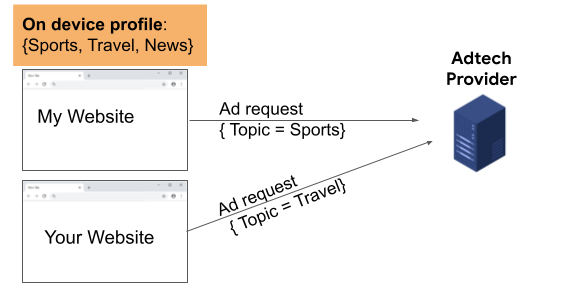}\\
    \end{tabular}
    \caption{Comparison of IBA under (a) third party cookies and (b) the Topics API. In the former, the ad tech provider gets to build the browsing history of a user. In the latter, the ad tech provider only observes a single topic for this user.}
    \label{fig:cookievtopics}
\end{figure}

\paragraph{Re-identification risks of the Topics API} 
Compared to third-party cookies, the Topics API has a significantly lower risk of re-identifiability (the former  guarantees re-identifiability by nature of being a persistent cross-site identifier). The goal of this section is to formally measure this risk in the case of misuse of the API. 

Consider a user that visits two sites every week. Over time, even with the randomization, the sequence of topics observed on website 1 will be similar to the sequence of topics observed on website 2 (for instance, we expect an exact match in $(1-p)/5$ fraction of the weeks). Thus two sites could try to collude to use the Topics API to link the identity of a user across them.%
\footnote{We restrict our analysis to two sites only for simplicity of exposition, as two sites are sufficient to elucidate the re-identification risk. We refer to  section~\ref{sec:limits} for the limitations of our work.} 

Figure~\ref{fig:topic_linking} shows how this attack may happen. We now formalize the re-identification risks of the Topics API from the perspective of website 2 as an attacker colluding with website 1. 

Let  $N = |\topics|$ and $\ospace = \topics^r$. That is if $o \in \ospace$ then $o = (o^1, \ldots, o^r)$ where $o^s$ corresponds to the topic returned by the API on week $s$. We begin by modeling a single representation of the Topics API.  Based on Algorithm~\ref{alg:topics}, for every user $i \in \ispace$, the topic returned by the API at round $s$ depends on the set of top topics $S_i^s$ associated with the user. Having fixed this set, the topic selection of the API at round $s$ can be modeled by a matrix $\pmat_s$ given by:
\begin{equation}
    \pmat_s[i,o^s] = \left\{ \begin{array}{cc}
    q_{\text{in}} := (1- p)/5 + p/N& o^s \in S_i^s \\
    q_{\text{out}} := p/N & o^s \notin S_i^s
    \end{array}\right.
\label{eq:topic_matrix}
\end{equation}
It is easy to see that the representation matrix for $o \in \ospace$ is given by
\begin{equation*}
    \pmat[i,o] = \prod_{s=1}^r \pmat[i, o^s]
\end{equation*}
\begin{figure}
    \centering
    \includegraphics[scale=0.3]{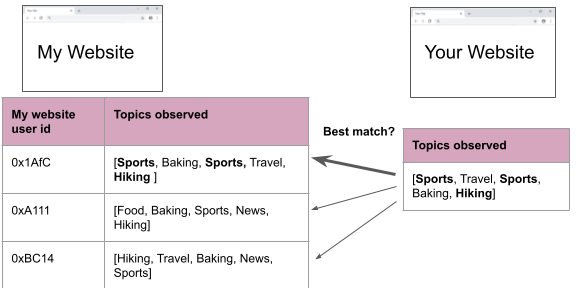}
    \caption{Example of two sites using the sequence of topics returned by the API to potentially re-identify a user across two sites. In the example the user on website 2 shares 3 topics with a user on website 1. This has the most matches across all three users.}
    \label{fig:topic_linking} 
\end{figure}

Thus far we have assumed the top set of topics $S_i^s$ to be fixed. However,  notice that these are not observable by the websites directly, since they can only see the samples from the top sets. Moreover, the sets $S_i^s$ are determined by the behaviors of the users which we encode as a probabilistic process in the distribution $\pmatdist$. 

Formally, the matrix $\pmat$ is a random variable sampled from a (latent) distribution $\pmatdist$ which encodes the way that the top set of topics $S_i^s$ is generated. Clients of the Topics API (i.e., websites) learn about the sampled matrix $\pmat$ through the observation of samples of the representations in their site.

With this modeling in mind, we see that this re-identification risk can be defined in terms of the (partial) information random-user model:
\begin{enumerate}
    \item A matrix $\pmat$ is sampled from $\pmatdist$.
    \item A sample of representations $W = (W_1, \ldots, W_n) \sim \pmat$ ($W_i$ corresponds to the representation of user $i$) is obtained from website~$1$ and shared with colluding website~2.
    \item Given the representation $O$ of a random user on website~2, the attacker must find a $W$-measurable function to predict the identity of that representation to match that user on website~1.
\end{enumerate}

\paragraph{Modeling the distribution $\pmatdist$} A natural question is: how can we model the distribution $\pmatdist$? And, can we define the optimal action $\varphi$ taken by an attacker given their knowledge of $\pmatdist$? Below we propose a natural parametric family of distributions $\pmatdistfamily$, and an efficient way for the attacker to estimate their parameters based only on the samples known. We also  describe a simple to implement optimal attack for this parametric family. In section~\ref{sec:exp-mi} we verify empirically that our assumptions on $\pmatdistfamily$ closely match the observations from current web traffic. 

Note that since $\pmat$ is fully determined by the top sets $S_i^s$, we can equivalently define $\pmatdistfamily$ as a family of joint distributions over the sequence of random variables $(S_i^s)_{i\in[n],s \in [r]}$. We denote by $\mathbb{P}$ the probability measure induced by $\pmatdist$ on $S_i^s$. A distribution $\pmatdist$ belongs to $\pmatdistfamily$ if it satisfies the following conditions:
\begin{enumerate}
    \item All users have the same distribution for variable $S^s_i$ for a given time $s$ (i.e.  the top topics of the user are sampled independent and identically distributed from the same distribution, but naturally users can have different sampled top topics.)
        
    \item For every user, samples of top sets are independent across time (but not necessarily identically distributed).
\end{enumerate}
It is not hard to see that the assumptions on $\pmatdistfamily$ imply that distribution $\pmatdist \in \pmatdistfamily$ if and only if there exist distributions $\pmatdist_1, \ldots, \pmatdist_r$ such that
\begin{equation*}
    \pmatdist(\pmat) = \prod_{s=1}^r \pmatdist_s(\pmat_s).
\end{equation*}
For each distribution $\pmatdist_s$ we will be interested in the following 
\begin{equation*}
    \pmatdist_s(\pmat_s[i,o] = q_{\text{in}}) = \mathbb{P}(o \in S_i^s) = p_s[o]
\end{equation*}
These parameters represent the probability that topic $o$ is part of the top $5$ topics of a user. The following lemma, which is proved in the Appendix, shows how we can use observations from user representations to estimate these terms.
\begin{lemma}
\label{lemma:estimation}
Let $W_1^s, \ldots, W_n^s$ be a sample of topics on website 1, let $N = |\topics|$ and $\delta > 0$. Let also
$$\widehat{p}_s[o] = \frac{1}{n} \sum_{i=1}^n \frac{\ind_{W_i^s = o} - q_{\text{out}}}{q_{\text{in}} - q_{\text{out}}},$$
where $\ind_{W_i^s = o}$ is 1 iff $W_i^s = o$. With probability $1 - \delta$ uniformly across all topics $o \in \topics$ the following inequality holds:
\begin{equation}
    |\widehat{p}_s[o] - p_s[o]| \leq \frac{1}{q_{\text{in}} - q_{\text{out}}} \sqrt{\frac{\log(2 N/\delta)}{2 n}}.
\end{equation}
\end{lemma}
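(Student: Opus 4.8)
The plan is to recognize $\widehat p_s[o]$ as an empirical average of $n$ independent, identically distributed, bounded random variables whose common mean is exactly $p_s[o]$, and then apply Hoeffding's inequality followed by a union bound over the $N$ topics.

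First I would establish the i.i.d.\ structure and unbiasedness. Fix a week $s$ and a topic $o \in \topics$. By the defining properties of $\pmatdistfamily$, the top sets $S_1^s, \dots, S_n^s$ are i.i.d.\ across users, and, conditioned on $\pmat$, the observed topics $W_1^s, \dots, W_n^s$ are independent with $W_i^s \sim \pmat_s[i,:]$; hence the pairs $(S_i^s, W_i^s)$ are i.i.d.\ across $i$, and in particular so are the indicators $\ind_{W_i^s = o}$. For a single user, conditioning on whether $o \in S_i^s$ and using \eqref{eq:topic_matrix},
\[
\Pr(W_i^s = o) = \Pr(o \in S_i^s)\, q_{\text{in}} + \Pr(o \notin S_i^s)\, q_{\text{out}} = p_s[o]\,(q_{\text{in}} - q_{\text{out}}) + q_{\text{out}},
\]
so that $\E\!\left[\frac{\ind_{W_i^s = o} - q_{\text{out}}}{q_{\text{in}} - q_{\text{out}}}\right] = p_s[o]$ (note $q_{\text{in}} - q_{\text{out}} = (1-p)/5 > 0$, so the denominator is harmless). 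Thus $\widehat p_s[o] = \frac1n\sum_{i=1}^n Z_i$ with $Z_i := \frac{\ind_{W_i^s=o} - q_{\text{out}}}{q_{\text{in}} - q_{\text{out}}}$ i.i.d., $\E[Z_i] = p_s[o]$, and each $Z_i$ taking values in the interval $\left[\frac{-q_{\text{out}}}{q_{\text{in}} - q_{\text{out}}},\ \frac{1 - q_{\text{out}}}{q_{\text{in}} - q_{\text{out}}}\right]$, which has length $\frac{1}{q_{\text{in}} - q_{\text{out}}}$.

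Next I would invoke Hoeffding's inequality for averages of bounded independent variables: for any $t > 0$,
\[
\Pr\!\left(\bigl|\widehat p_s[o] - p_s[o]\bigr| \geq t\right) \leq 2\exp\!\left(-2 n t^2 (q_{\text{in}} - q_{\text{out}})^2\right).
\]
Setting the right-hand side equal to $\delta/N$ and solving for $t$ gives $t = \frac{1}{q_{\text{in}} - q_{\text{out}}}\sqrt{\frac{\log(2N/\delta)}{2n}}$. A union bound over the $N = |\topics|$ topics then shows that, with probability at least $1 - \delta$, the inequality $\bigl|\widehat p_s[o] - p_s[o]\bigr| \leq \frac{1}{q_{\text{in}} - q_{\text{out}}}\sqrt{\frac{\log(2N/\delta)}{2n}}$ holds simultaneously for every $o \in \topics$, which is the claim.

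The only genuine subtlety — the part worth being careful about — is the bookkeeping of the two sources of randomness: $\widehat p_s[o]$ is random both through the latent draw $\pmat \sim \pmatdist$ (equivalently, the draw of the top sets $S_i^s$) and through the sampling $W_i^s \sim \pmat_s[i,:]$. The key observation that makes the argument clean is that, once $\pmat$ is integrated out, the exchangeability assumption built into $\pmatdistfamily$ turns the $W_i^s$ into genuinely i.i.d.\ variables with the simple mean $p_s[o]$; no conditioning on $\pmat$ is then needed, and Hoeffding applies directly to their empirical average. If instead one insisted on conditioning on $\pmat$, the $W_i^s$ would still be independent but no longer identically distributed (different users have different $S_i^s$), and $\frac1n\sum_i \pmat_s[i,o]$ rather than $p_s[o]$ would be the conditional mean; one would then need an additional concentration step to relate $\frac1n\sum_i \ind\{o\in S_i^s\}$ to $p_s[o]$. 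Working with the unconditional distribution avoids this and is the route I would take.
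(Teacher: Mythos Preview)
Your proposal is correct and follows essentially the same route as the paper's proof: identify the $Z_i$ as i.i.d.\ bounded variables with mean $p_s[o]$, apply Hoeffding, and union bound over the $N$ topics. Your handling of the range (noting the interval has length exactly $1/(q_{\text{in}}-q_{\text{out}})$) is in fact slightly more precise than the paper, which only records the cruder bound $|Y_i[o]|\le 1/(q_{\text{in}}-q_{\text{out}})$, and your explicit discussion of why the unconditional i.i.d.\ structure is the right way to handle the two layers of randomness is a point the paper leaves implicit.
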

Let us now proceed to identify the optimal attacker under the assumptions on $\pmatdist$. We begin by discussing a natural attacker to derive some intuition into the results of our main theorem of the section. 

\begin{example}[Hamming attack]\label{ex:hamming}
Given a representation \\ $o = (o^1, \ldots, o^r)$ and representations $W = (W_i^s)_{i \in [n], s\in [r]}$ a na\"ive attacker will assign representation $o$ to a user $i$ such that $|\{s : W_i^s \neq o^s\}$ is minimal. That is, it will naturally try to find the user that minimizes the Hamming distance between $W_i$ and $o$.
\end{example}
 While the Hamming distance attack matches the intuition that the most likely user that generates a representation is that with the largest overlap on the topic sequence of website 1, one should keep in mind that not all topics are the same. Indeed, a match on a very unpopular topic should be worth more than a match on a popular topic. It is not hard to see that the parameters $p_s[o]$ are a proxy for the popularity of a topic across the population. Thus an optimal attacker should leverage this information. The following theorem, proved in the Appendix, formalizes this intuition.
\begin{thm}[Asymmetric Weighted Hamming Distance Attack]
\label{thm:optimal_attacker}
    Given a representation $o=(o^1, \ldots, o^r)$ on website~2, and representations  $W = (W_i^s)_{i\in [n], s\in[r]}$ on website~1.  An attacker that wants to maximize its accuracy under the partial information setting selects the identity of the user that minimizes the following \emph{asymmetric weighted Hamming distance}
    \begin{flalign*}
         -\!\!\!\sum_{s : W_i^s = o^s}\!\!\! \log\left(q_{\text{out}} + \frac{(q_{\text{in}}-q_{\text{out}})q_{\text{in}} p_s[o^s]}{q_{\text{out}} + (q_{\text{in}} - q_{\text{out}})p_s[o^s]}\right)
        -\sum_{s : W_i^s \neq o^s} \!\!\!\!\!\log\left(q_{\text{out}} + (q_{\text{in}} - q_{\text{out}}) \mathbb{P}(o^s \in S_i^s| W_i^s \in S_i^s)\right).
    \end{flalign*}
\end{thm}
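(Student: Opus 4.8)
The plan is to reduce the theorem to identifying, for each observed representation $o=(o^1,\dots,o^r)$, the maximizer over $i\in\ispace$ of the posterior $\E[\pmat[i,o]\mid W]$, and then to evaluate that posterior in closed form using the product structure of the Topics model. By \Cref{lemma:partial_info_bound} (and its proof), an optimal $W$-measurable attacker maximizes
\[
\E\bigl[\accr(\varphi_R)\,\big|\,W\bigr]=\frac1n\,\tr\!\bigl(\E[\pmat\mid W]\,\amat^\top\bigr)=\frac1n\sum_{o\in\ospace}\sum_{i\in\ispace}\amat[i,o]\,\E[\pmat\mid W][i,o],
\]
and since the entries $\{\amat[i,o]\}_{i\in\ispace}$ in each column form a probability vector, the optimum places all of its mass—separately for each $o$—on some $i\in\argmax_{j\in\ispace}\E[\pmat[j,o]\mid W]$. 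So it suffices to show that this $\argmax$ coincides with the minimizer of the asymmetric weighted Hamming distance in the statement.

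First I would use the factorization of both the model and the partial information across weeks. By the two assumptions defining $\pmatdistfamily$, the per-week matrices $\pmat_1,\dots,\pmat_r$ are independent, the $i$-th row of $\pmat_s$ is a function of $S_i^s$ only, $W_i^s$ is drawn from $\pmat_s[i,:]$ given $S_i^s$, and the top sets $S_i^s$ are i.i.d.\ across users. Hence, for a fixed $i$, the pair $(\pmat_s[i,o^s],W_i^s)$ is independent across $s$ and independent of $W_j^s$ for every $j\neq i$; combining this with $\pmat[i,o]=\prod_{s=1}^r\pmat_s[i,o^s]$ gives
\[
\E[\pmat[i,o]\mid W]=\prod_{s=1}^r\E\bigl[\pmat_s[i,o^s]\,\big|\,W_i^s\bigr].
\]
Because $-\log$ is strictly decreasing, $\argmax_i\prod_{s}\E[\pmat_s[i,o^s]\mid W_i^s]=\argmin_i\bigl(-\sum_{s=1}^r\log\E[\pmat_s[i,o^s]\mid W_i^s]\bigr)$, so the whole problem reduces to evaluating the single per-week factor $\E[\pmat_s[i,o^s]\mid W_i^s]$.

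Next I would compute that factor by Bayes' rule. Since $\pmat_s[i,o^s]$ equals $q_{\text{in}}$ if $o^s\in S_i^s$ and $q_{\text{out}}$ otherwise, we have $\E[\pmat_s[i,o^s]\mid W_i^s]=q_{\text{out}}+(q_{\text{in}}-q_{\text{out}})\,\Pr(o^s\in S_i^s\mid W_i^s)$, so only the posterior $\Pr(o^s\in S_i^s\mid W_i^s)$ remains, and I would split on whether $W_i^s=o^s$. In the match case, $\Pr(W_i^s=o^s\mid o^s\in S_i^s)=q_{\text{in}}$ and $\Pr(W_i^s=o^s)=q_{\text{out}}+(q_{\text{in}}-q_{\text{out}})p_s[o^s]$, so Bayes gives $\Pr(o^s\in S_i^s\mid W_i^s=o^s)=q_{\text{in}}p_s[o^s]/(q_{\text{out}}+(q_{\text{in}}-q_{\text{out}})p_s[o^s])$, and substituting produces exactly the first summand. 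In the mismatch case $W_i^s\neq o^s$, I would decompose the event that $W_i^s$ takes its observed value into the uniform-random branch (independent of $S_i^s$) and the genuine branch (on which $W_i^s\in S_i^s$, with posterior over $S_i^s$ equal to the prior conditioned on $W_i^s\in S_i^s$), which identifies $\Pr(o^s\in S_i^s\mid W_i^s)$ with $\Pr(o^s\in S_i^s\mid W_i^s\in S_i^s)$ and hence gives the second summand. Taking $-\log$ and summing over weeks then yields precisely the asymmetric weighted Hamming distance.

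The hard part is the mismatch case. Unlike the match case—where conditioning on $W_i^s=o^s$ ties back directly to the marginal $p_s[o^s]$—the posterior for $o^s\neq W_i^s$ genuinely involves the joint law of the five-element random set $S_i^s$ (through the co-occurrence of $o^s$ and $W_i^s$), which is not among the parameters $\{p_s[o]\}$ that the attacker estimates; one must carefully isolate the API's random-topic branch and argue that its contribution is shared by all identities—by the exchangeability of users in $\pmatdistfamily$, the rescaling it induces on each per-week factor depends only on the observed value of $W_i^s$ and not on $i$—so that it drops out of the $\argmin$ and the posterior can be reported as $\Pr(o^s\in S_i^s\mid W_i^s\in S_i^s)$. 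The remaining items are bookkeeping: handling ties in the $\argmax$, noting that $\E[\pmat\mid W]$ is row-stochastic so that \Cref{randomUserAccuracyBound}-style reasoning applies, and verifying that the per-$o$ optimization is legitimate because the objective decomposes additively over columns.
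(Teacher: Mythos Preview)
Your overall architecture is exactly the paper's: invoke \Cref{lemma:partial_info_bound} to reduce to maximizing $\E[\pmat[i,o]\mid W]$, use the across-week and across-user independence in $\pmatdistfamily$ to factor this as $\prod_s\E[\pmat_s[i,o^s]\mid W_i^s]$, write each factor as $q_{\text{out}}+(q_{\text{in}}-q_{\text{out}})\Pr(o^s\in S_i^s\mid W_i^s)$, and then apply Bayes in the match case. That part is correct and matches the paper line for line.

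The gap is in your handling of the mismatch case. Your decomposition into the random-topic branch $R$ and the genuine branch $\neg R$ gives
\[
\Pr(o^s\in S_i^s\mid W_i^s=w)
= p_s[o^s]\,\Pr(R\mid W_i^s=w)
+ \Pr(o^s\in S_i^s\mid w\in S_i^s)\,\Pr(\neg R\mid W_i^s=w),
\]
which is a genuine convex combination of two different quantities, not an identification of $\Pr(o^s\in S_i^s\mid W_i^s)$ with $\Pr(o^s\in S_i^s\mid W_i^s\in S_i^s)$. Your fallback argument---that the random-branch correction ``depends only on the observed value of $W_i^s$ and not on $i$'' and therefore drops out of the $\argmin$---does not work either: the observed value $W_i^s$ itself varies with $i$, so for a fixed week $s$ different candidate users carry different mixing weights $\Pr(R\mid W_i^s=w_i)=q_{\text{out}}/\bigl(q_{\text{out}}+(q_{\text{in}}-q_{\text{out}})p_s[w_i]\bigr)$, and since this correction sits \emph{inside} the logarithm as an additive term rather than outside as a multiplicative one, it cannot be absorbed into a common constant.

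The paper's own proof is more modest here than your plan: it does not attempt to simplify the mismatch posterior at all. It computes up to $\pmatdist_s(P_s[i,o^s]=q_{\text{in}}\mid W_i^s)=\Pr(o^s\in S_i^s\mid W_i^s)$ and then ``simply rewrites'' this quantity with the notation $\mathbb{P}(o^s\in S_i^s\mid W_i^s\in S_i^s)$, treating it as a label for the posterior rather than proving an identity. The additional simplification (making the mismatch term a function of $o^s$ alone) is deferred to \Cref{assum:per_round_independence} and \Cref{lemma:alpha_calculation} as an explicit approximation used only in the experiments. So your proof is fine if you stop at reporting the mismatch factor as $q_{\text{out}}+(q_{\text{in}}-q_{\text{out}})\Pr(o^s\in S_i^s\mid W_i^s)$ and do not try to push further.
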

Notice how the attack in Theorem~\ref{thm:optimal_attacker} can be seen as an asymmetric variant of the simple Hamming distance attack described above. It is important to note that while there may likely be several heuristics for utilizing the Topics API signal for re-identification our framework has allowed us to derive  --- from first principles --- a simple optimal algorithm under some basic assumptions. We expect that future work on understanding the privacy of the Topics API can be done by  relaxing some of these assumptions.

\section{Empirical analysis}
\label{sec:exp}

In this section we give empirical evaluations of our model for re-identification risk  
on two real-world datasets: (1) A Google proprietary dataset containing de-identified user data from a simulation of the Chrome Topics API and (2) the public Million Song Dataset~\cite{MSD}.
In order to foster the reproducibility of our results, we released our code open source.\footnote{The code is available at: \url{https://github.com/google-research/google-research/tree/master/re_identification_risk}}
This section proceeds as follows. 
In sections~\ref{sec:exp-attacks}-\ref{exp-attack-algorithms}, we present empirical implementations of the re-identification attacks presented before and more advanced machine learning heuristics. 
Next, in section~\ref{sec:exp-attacks-results}, we show our results for the re-identification attack for Topics API.
In section~\ref{sec:exp-mi} we use machine learning methods to validate the hypothesis made in section~\ref{sec:case-study}. To do so, we further extend and validate the analysis of mutual information of Topics published previously~\cite{topics-explainer}.
In section~\ref{exp:msd}, we show our results for the re-identification attack for Million Song Dataset.

\subsection{Re-identification task for Topics API  in random-user model}

In this section we present our re-id attack on the Topics API. As a first step, we describe the data used

\paragraph{Chrome Topics API data}
In our empirical analysis we simulate the output observed by an adtech from the Topics API for a set of users over a period of time.  This is achieved using a Google proprietary dataset of de-identified user  browsing histories. Starting from this dataset, we run the Topics API algorithm for a sample of such users and simulate the output observed for two sites by the adtech from the Topics API sampling distribution. Our observation period consists of 8 intervals of 7 days of traffic shifted by 3 days each -- that is we consider intervals $[1,7], [4,10], [7, 13], \ldots$.
Each interval (hence-forth epoch) is used to establish the top $k$ topics of the user using the API topic model.\footnote{The real Topics API has disjoint epochs of 7 days, here we simulate training over overlapping periods because our analysis is limited for privacy reasons to 4 weeks of data. This allows us to simulate longer topics sequences. We do not observe a significant difference in the results.} We restrict our analysis to the set of users that are observed in every epoch. 

To be consistent with the current Topics API specification~\cite{topics-github}, for users with fewer than $k$ topics in an epoch, we pad the top topics with random topics. Moreover, we set $k=5$ for the number of topics and use $p=5\%$ for the probability of releasing a random topic instead of an organic one, as currently implemented in the Chrome browser.

\label{sec:exp-attacks}

\paragraph{Methods}
We now present the methodology used to establish the accuracy of empirical re-identification attacks on the Topics API. We focus on the random-user model and consider different attack algorithms presented in section~\ref{exp-attack-algorithms}. We extract a target dataset of $10$ million users chosen uniformly at random that are used in the re-identification attack analysis as the set $\ispace$ over which to re-identify the user and simulate the observation on website 1 of all their topics sequence for $r$ epochs ($r \in [1,8])$. Then we repeat $10,000$ times a uniformly random draw of a user from the set $\ispace$, generate the $r$-length sequence on website 2 for such user and verify if the attack algorithm matches it correctly to the sample in website 1. 

\subsection{Attack algorithms}
\label{exp-attack-algorithms}
We simulate the following three attacks methods: 
the {\bf Unweighted Hamming} attack and the {\bf Asymmetric Weighted Hamming} attack as well as the {\bf Neural Network} attack. We now describe each method before presenting our results in section~\ref{sec:exp-attacks-results}.

\subsubsection{Unweighted Hamming Attack}
This method is an exact implementation of the simple attack presented in Example~\ref{ex:hamming}. 

\subsubsection{Asymmetric Weighted Hamming Attack}
This method is a simplified implementation of the attack given in Theorem~\ref{thm:optimal_attacker} which is optimal under the assumptions described above. 

In our experiments, we further make the approximation of assuming that $\forall W_i^s\not=o_s,\mathbb{P}(o^s \in S_i^s| W_i^s \in S_i^s)$ is a function only depending on $o^s$. 
Then we show that $\forall W_i^s\not=o_s,\mathbb{P}(o^s \in S_i^s| W_i^s \in S_i^s) = \frac{4p_s[o^s]}{5-p_s[o^s]}$ (see Lemma~\ref{lemma:alpha_calculation} for more details).
This reduces the parameters of the model to be estimated to only $p_s[o^s]$. Moreover, given that we empirically observe $p_s[o^s]$ to be very close in every period we further assume $p_s[o^s] = p[o^s]$.

\subsubsection{Neural Network Attack} \label{exp-attacks-dnn}
In addition to the previous methods, we also implement a heuristic attack method based on a deep neural network.
Our method is general and works on an arbitrary embedding representation for the topics sequence of a user. In section~\ref{exp:s2q} we show how we obtain such embedding from sequence to sequence models while in this section we focus on how to use any embedding for matching users.  

We train a deep neural network which takes two sequence embeddings as inputs and outputs a similarity score in $[0,1]$, indicating the predicted probability that the two sequences are from the same user.
Our network structure is similar to that of the  Grale infrastructure~\cite{halcrow2020grale}. The detailed network structure is presented in Figure~\ref{fig:grale}.

\paragraph{Training process.}
We sample $20$ million random users (different from the target set $\ispace$  used in the re-identification task). 
For each sampled user $u$, we simulate a pair of topics sequences where both sequences are from $u$ and we regard it as an example from the class of correct matching.
We also create $10$ pairs of sequences where the first sequence of each pair is the sequence from $u$ and the second sequence is from a random user $v\not =u$, and we regard each pair as an example from the class of incorrect matching.
Each training example is a pair of sequences generated by the above procedure and each sequence is embedded using a sequence model.
The training objective of the neural network is to minimize the binary cross entropy loss.
\paragraph{Re-identification inference.}
Finally, to match a user sample using the neural network, given a user sequence $A$, we enumerate every sequence $B$ from the target dataset and feed $(A,B)$ to the neural network.
We choose the sequence $B^*$ as the re-identification output where the sequence $(A,B^*)$ maximizes the predicted probability given by the neural network.

\subsection{Results for the re-identification attack}
\label{sec:exp-attacks-results}

We report the estimation of the probability of each attack algorithm to correctly re-identify the random user over the $10$ million target set of users. In our analysis we study $r=1,2,4,6,8$ epochs. The main result is shown in Figure~\ref{fig:reid_10m}.

\begin{figure}
\centering
\includegraphics[width=0.45\textwidth]{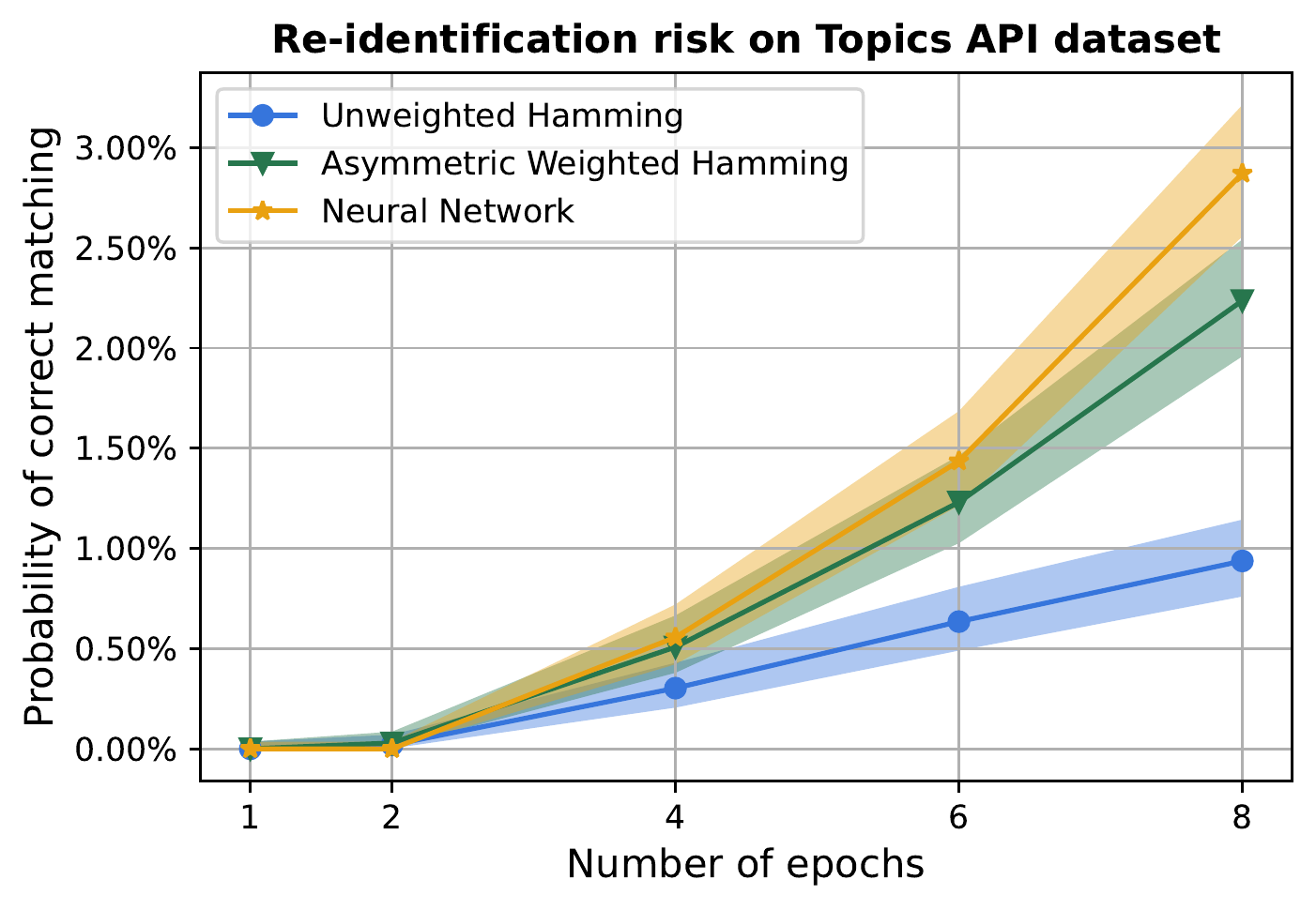}
\caption{Probability of a correct cross-site match depending on the number of epochs observed. The $95\%$ confidence intervals are reported. Notice how, even after 8 epochs, the probability of correct re-identification is below 3\%. \label{fig:reid_10m}}
\end{figure}

As expected, we observe that as the number of epochs increases there is an increased probability of correctly matching the user across the two sites (as more information is available to the attacker). Notice also how the more sophisticated Neural Network attack outperforms all methods but has performance close to our implementation of the provably good attack given by Theorem~\ref{thm:optimal_attacker}. This is a further confirmation of the validity of the simplifying assumptions made in our theoretical study. On the other hand, the simple unweighted Hamming attack performs less well. We also observe the gap between algorithms increases with the number of epochs. This is expected as more epochs of observation allow the advanced neural network algorithm to learn more correlations across the data.
Overall, we observe that even after $8$ epochs, the probability of correct re-identification is below $3\%$.  

\newpage
\subsection{Validation of the assumptions: Analysis of Mutual Information}
\label{sec:exp-mi}

\begin{wraptable}{R}{0.63\textwidth}
\centering
\small
\begin{tabular}{lll}
\text{Hyperparameter} & Transformer & Transformer-LSTM \\ \hline
Encoder & Transformer & Transformer \\
Decoder & Transformer & LSTM \\
Attention dropout rate & 0.1 & 0.1 \\
Attention layer size & 1,024 & 1,024 \\
Dropout rate & 0.1 & 0.1 \\
Embedding size & 1,024 & 1,024 \\	
MLP dimension & 4,096 & 4,096 \\	
Number of attention heads & 8 & 8 \\
Number of encoder layers & 6 & 6 \\
Number of decoder layers & 6 & 8 \\
Decoder Hidden dimension & 1,024 & 1,024 \\
Training batch size & 1,536 & 1,536 \\
Total number of parameters & - & 300M \\
\hline
\end{tabular}
\caption{\label{tab:hyperparameters} Hyper-parameters of the S2S Model Architectures.}
\end{wraptable}

\begin{figure}[b]
\centering
\includegraphics[width=0.9\textwidth] {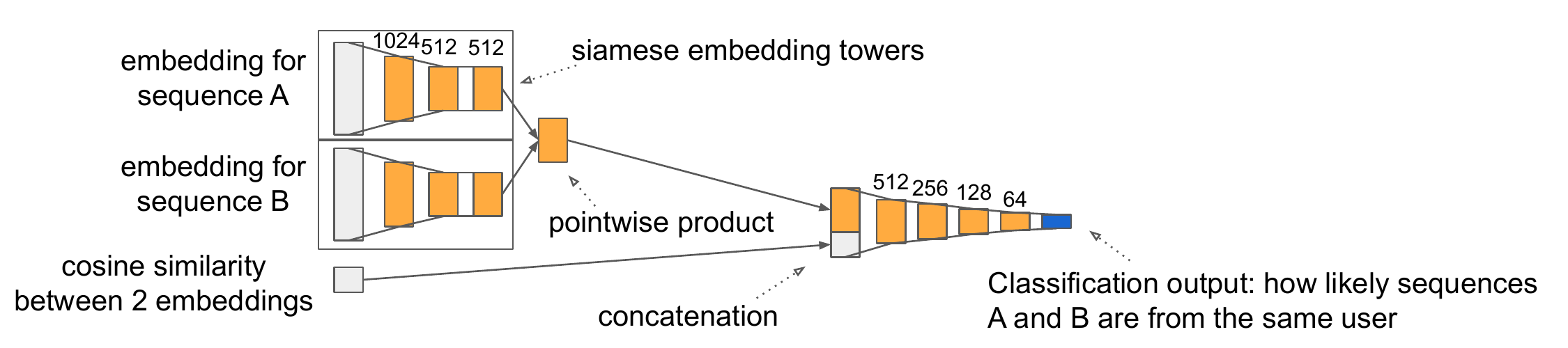}
\caption{The structure of the network for re-identification attack. %
The embeddings of each sequence are computed by feeding the sequence to the trained S2S model (Transformer-Only) described in Section~\ref{sec:exp-mi} and concatenating the hidden representations of the final encoder layer at each week. %
The number in the figure indicates the size of each layer. If not specified, each layer is a Fully Connected layer + ReLU activation. \label{fig:grale} }
\end{figure}

In this section, our aim is to measure the validity of a key assumption made in Theorem~\ref{thm:optimal_attacker} which we simulated: that of the cross-time independence of topics.

To do this we  measure  mutual information $M^*= I(A^1,A^2, \cdots A^r;$ $B^1,B^2, \cdots B^{r})$, for $A=A^{1},A^{2},\cdots A^{r}$ and $B=B^{1},B^{2},\cdots,B^{r}$ being the two sequences of topics observed from a given user on two websites.
Notice that, as shown in~\cite{topics-explainer} in case of cross-time independence this $M^*$ reduces to the easy to compute $\sum^r_{s=1} I(A^s; B^s)$. By estimating $M^*$ and showing that it is close to $\sum^r_{s=1} I(A^s; B^s)$ we verify the accuracy of the assumption.

In general estimation of $M^*$ is a non-trivial task as the distribution of long sequences of topics has an exponentially growing support. In this section we use advanced ML techniques to tackle this challenge allowing us to extend the previously published work in~\cite{topics-explainer} beyond the analysis of $2$ epochs of data.

The rest of the section presents the ML model and how we used it to estimate the mutual information.

\subsubsection{Sequence to Sequence Model}
\label{exp:s2q}

We use state-of-the-art sequence to sequence models (S2S)~\cite{sutskever-seq2seq} that are currently being used in a variety of machine learning applications ranging from natural language processing~\cite{transformer} to computer vision. As a byproduct of this model we develop an embedding for a topics sequence that we used in the previous section for the Neural Network attack.

First, we introduce formally the concept of sequence-to-sequence model.
A sequence-to-sequence (S2S) model~\cite{sutskever-seq2seq} assigns a probability to a sequence of target symbols $B^1, B^2, \cdots B^Y$ given a sequence of source symbols $A^1, A^2, \cdots A^X$: $P(B^1=b^1,B^2=b^2,\cdots,B^Y=b^y|A^1=a^1,A^2=a^2,\cdots,A^X=a^x)$. The general architecture of an S2S model consists of an encoder network which generates an embedding of the source sequence and a decoder network that generates the target sequence conditional on the source sequence. A further refinement is an attention mechanism~\cite{bahdanau2015} that allows the decoder to attend to specific tokens of the source sequence when assigning probabilities to each token in the target sequence. A number of architectures have been proposed for the encoder and decoder networks. In this work, we will compare two popular architectures: Transformer~\cite{transformer} and a variant consisting of a Transformer encoder and a Long Short Term Memory (LSTM) decoder~\cite{hybridmodel}.

\subsubsection{Estimation method}
Now, we present how to use S2S models to estimate mutual information.

The mutual information between sequences $A$ and $B$ can be written as a difference of two entropies~\cite{elementsofit}:
\begin{align}
    & I(A^1,A^2, \cdots, A^r ; B^1,B^2, \cdots, B^r) = \\
    \nonumber & H(B^1,B^2, \cdots, B^r) - H(B^1,B^2, \cdots, B^r|A^1,A^2,\cdots, A^r)
    \label{eq:seqmutualinformation}
\end{align}

Thus if we train two S2S models to estimate $H(B^1,B^2, \cdots B^r)$ and 
$H(A^1,A^2, \cdots A^r|B^1,B^2,\cdots ,B^r)$ respectively, we can compute an estimate of  $I(A^1,A^2, \cdots A^r ; B^1,B^2, \cdots B^r)$.

Given a training data set of tuples consisting of a user, and a pair of topics sequences associated with the user on two websites, $A=A^1,A^2,\cdots, A^r$, $B=B^1,B^2,\cdots,B^r$, we estimate a sequence-to-sequence model to predict $B$ given $A$. Over an unseen test data set consisting of $n$ tuples $\{ {A}_{i}^{1},{A}_{i}^{2},$ $\cdots {A}_{i}^r,{B}_{i}^{1},{B}_{i}^{2},$ $\cdots {B}_{i}^{r} \}_{i=1}^{n}$, we can estimate the conditional entropy of $B$ given $A$:
\begin{align}
H(B^1,B^2,\cdots,B^r | A^1,A^2,\cdots,A^r) = \\
-\frac{1}{n} \sum_{i=1}^{r} \text{log} P({B}_{i}^{1},{B}_{i}^2,\cdots {B}_{i}^{R} | {A}_{i}^{1},{A}_{i}^2,\cdots {A}_{i}^{r})
\label{eqn:conditionalentropy}
\end{align}

To estimate the unconditional entropy of the target sequence, we replace the source sequence within each tuple in the data set (both training and test) with a single special token $\mathcal{\$}$, (i.e. $A^1, A^2, \cdots, A^r$ is replaced with $\mathcal{\$}$). We can then use the above approach to train a model to estimate the unconditional entropy of the target sequence:
\begin{equation}
H(B^1,B^2,\cdots,B^r) = -\frac{1}{n} \sum_{i=1}^{r} \text{log} P({B}_{i}^{1},{B}_{i}^2,\cdots {B}_{i}^{r} | \mathcal{\$})
\label{eq:unconditionalentropy}
\end{equation}

\subsubsection{Model Details}
In our work we experiment with two separate S2S model architectures: either the vanilla Transformer~\cite{transformer}
or a variant~\cite{hybridmodel}
consisting of a Transformer encoder and an LSTM decoder. (Exact details on the architecture may be found in the above references). For estimating mutual information using a given sequence-to-sequence model architecture, we use the same set of hyper-parameters for both the unconditional and the conditional model (Table~\ref{tab:hyperparameters}).

\subsubsection{Results}

We now present the main set of results from the sequence-to-sequence model using the Transformer architecture.
The hybrid architecture achieves a very close but slightly lower mutual information than the Transformer-only model so we omit its results and use the Transformer only embeddings in our re-identification analysis.

We measured the unconditional entropy of a sequence of topics for $r$ epochs and as well as the conditional entropy (of the second sequence of the user) and their difference which is the mutual information. We report the results as bits/epoch. Using the Transformer model trained on $r=8$ epochs, we observe $6.54$, $5.45$ and $1.09$ bits/epoch of unconditional entropy, conditional entropy and mutual information, respectively. On a single epoch of data we observe $7.59$, $6.66$, $0.93$ bits for unconditional entropy, conditional entropy and mutual information, respectively.

Notice how, on average, we observe about $1.1$ bits of mutual information per epoch of observation vs $0.93$ bits of a single epoch. 
This suggests that while there is indeed some information gained by looking at sequences of topics across time, previous observations of the Topics API do not provide significant information about the topic returned in the current epoch. This validates our hypothesis that topics are close to independent across time and the model for the distribution $\pmatdist$.

\subsection{Re-identification task for Million Song Dataset and Results}\label{exp:msd}
We now present our empirical study of re-identification attacks on the publicly available Million Song Dataset~\cite{MSD}.
Similar to the study of Topics API, we focus on the random-user model. 

In this dataset, a user is represented by all songs liked by them. The dataset contains $48$ million entries on the listening activity of about a million users. The number of distinct songs is about a million.

We simulate a system that outputs a sample of $r$ songs for a user, independently, to generate two different databases.
Then, we measure the risk of re-identifying the user across the two datasets, depending on the number $r$ of independent samples observed. The results are reported in Figure~\ref{fig:reid_song}.

\begin{wrapfigure}{r}{0.5\textwidth}
\centering
\includegraphics[width=0.45\textwidth]{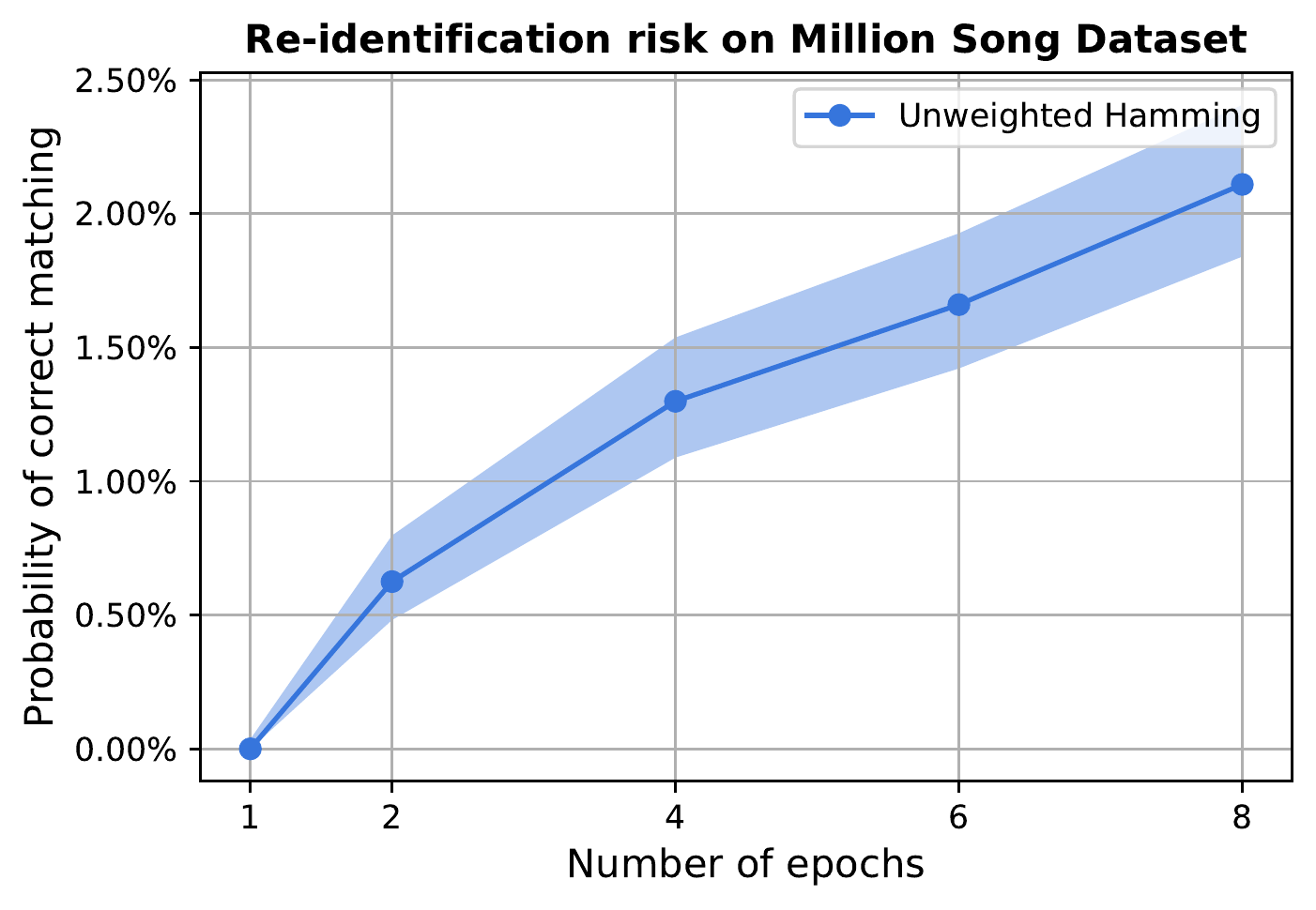}
\caption{Probability of a correct cross-site match depending on the number of epochs observed. The $95\%$ confidence intervals are reported.  \label{fig:reid_song}}
\end{wrapfigure}

Here we use the Unweighted Hamming Attack for guessing the match of the user (notice that the other two attacks were specifically designed for Topics, so they are not meaningful here). We note that observing $4$ independent random songs results in $1\%$ re-identification risk. This is an example of how our framework allows us to assess the risk in such data releases.

\section{Discussion and limitations}
\label{sec:limits}

In this paper we have presented a framework for quantifying re-identification risk. Our theoretical formulation is general enough to frame many common notions of privacy like $k$-anonymity and differential privacy, and to capture real-world examples such as the Topics API. 
Our experiments show an empirical estimation of re-id risk even on very large representation spaces. We conclude this paper highlighting some of the limitations of our work and future directions of research in this area.%

\paragraph{Scope of the privacy risk measurement}
First, it is important to observe that in our modeling we have focused exclusively, and purposefully, on re-identification risk. Preventing re-identification is only one of the many privacy and safety requirements of real-world applications. %
As real systems often require to bound additional risks (such as membership inference attacks~\cite{shokri2017membership}), we believe that privacy protections from differential privacy (DP) will be needed to complement re-identification risk analyses in many real systems. 

\paragraph{Theoretical limitations}
From a theoretical point of view, we make mild modeling assumptions on the generation process for the representations. Specifically, we assume that each user sample from $\pmat$ is drawn independently and we restrict our modeling to discrete representations. We remark that the partial information scenario we consider, and the distribution family $\pmatdistfamily$ we model, are just one of the many ways in which we could model an attacker with limited information on $\pmat$. We omit for instance, modeling the presence of more than 2 colluding clients. 

More crucially, for the upper bound results, we make a closed-world assumption:  we measure the re-identification risk under the assumption that this is the only information about a user available to an attacker. In other words, all of the side information is built into the representation matrix $\pmat$. While limiting in some situations, this assumption is reasonable in other cases (as we detail below). Moreover, our lower bounds hold even without this assumption (i.e., additional side information can only increase the risk). 

\paragraph{Limitations of the Topics API study}
In generating the data for the Topics API study, we assumed that users visit both websites in all of the epochs; thus we specifically avoided dealing with gaps in the data. This is unlikely to be the case in a real world deployment of such an API. 
An additional concern is that we used overlapping time periods to simulate longer time horizons, to deal with limitations of the data we had on hand. 
On the modeling side, while we have tried to develop powerful re-identification methods, future researchers may be able to find more powerful approaches, increasing the empirical re-identification risk.  

Finally, in our analysis we have focused exclusively on the Topics API output in isolation, ignoring other sources of side information  that an attacker might have. Considering additional sources of information is beyond the scope of our work and would require incorporating this information into either the matrix $\pmat$ or the distribution family $\pmatdistfamily$.

\paragraph{Applicability of the framework}
In summary,  our framework provides a lower bound on re-identification risk of data release that complements other privacy metrics that may be implemented by system designers, such as (differential) privacy budgets and aggregation (k-anonymity) guarantees. 

In cases where the `closed-world' assumption holds, we can use the methods developed in this work to provide even stronger upper bounds on re-identification risk. One potential such scenario of special interest is bounding \emph{insider risk}: the risk of an employee maliciously re-identifying user information across a company's systems. Given that such systems have strict controls on the flow of information (e.g. ACLs and auditing systems), we believe this framework can allow data protection officers to measure possibility of re-identification in such cases. 

Overall, we expect that the tools we have developed could be used by privacy advocates for a data-driven  verification of the privacy claims by data brokers and technology companies.

\appendix
\section{Proof of Theorem~\ref{thm:optimal_attacker}}
\begin{proof}
By Lemma~\ref{lemma:partial_info_bound} we know that, given a representation $o = (o_1, \ldots, o_r)$, the optimal attacker selects the user that maximizes 
\begin{equation}
    \max_i \mathbb{E}_\pmatdist\left[P[i, o] |W \right] =\max_i \prod \mathbb{E}_{\pmatdist_s}\left[P_s[i, o^s] |W_i^s \right],
    \label{eq:expprod}
\end{equation}

where the last equality follows by the independence of representations across time and users.
Let us calculate one of the terms in the above expression. Note that since $P_s[i,o^s]$ can only take two values $q_{\text{in}}$ and $q_{\text{out}}$, each factor in the above expectation is given by:
\begin{flalign}
    &\mathbb{E}_{\pmatdist_s}\left[P_s[i, o^s] |W_i^s \right] & \nonumber\\
    &=\pmatdist_s(P_s[i,o^s] = q_{\text{in}} | W_i^s) q_{\text{in}} + \pmatdist_s(P_s[i,o^s] =  q_{\text{out}} | W_i^s) q_{\text{out}} \nonumber \\
    & = q_{\text{out}} + \pmatdist_s(P_s[i,o^s] = q_{\text{in}}\ | \ W_i^s) (q_{\text{in}} - q_{\text{out}})
    \label{eq:expdecomp}
\end{flalign}

Note that if $W_i^s = o^s$, a straightforward application of Bayes rule and the fact that $\mathbb{P}(W_i^s = o^s | P_s[i,o^s] = q_{\text{in}}) = q_{\text{in}}$ yields
\begin{flalign*}
 \pmatdist_s(P_s[i,o^s] = q_{\text{in}}\ | \ W_i^s)  
&= \frac{q_{\text{in}}p_s[o^s]}{q_{\text{in}} p_s[o^s] + q_{\text{out}} ( 1 - p_s[o^s])}
\end{flalign*}
When $W_i^s \neq o^s$ we simply rewrite the conditional expectation by the more explainable $\mathbb{P}(o^s\in S_i^s | W_i^s \in S_i^s)$.
Plugging this expression in \eqref{eq:expdecomp} we see that
\begin{align*}
    E_{\pmatdist_s}\left[P_s[i, o^s] |W_i^s \right]
& =  
\left(q_{\text{out}} + \frac{(q_{\text{in}}-q_{\text{out}})q_{\text{in}} p_s[o^s]}{q_{\text{out}} + (q_{\text{in}} - q_{\text{out}})p_s[o^s]} \right) \ind_{W_i^s = o^s} \\ 
& + \left(q_{\text{out}} + (q_{\text{in}} - q_{\text{out}}) \mathbb{P}(o^s\in S_i^s | W_i^s \in S_i^s)\right)\ind_{W_i^s \neq o^s}
\end{align*}

The result follows by replacing this expression in \eqref{eq:expprod} and taking the logarithm of the product.
\end{proof}

\begin{proof}[Proof Of lemma \ref{lemma:estimation}]
Let $\mathbf{W}$ be a random variable sampled as follows. Sample a set of top topics $S \sim \pmatdist_s$. Then sample $\mathbf{W}$ according to 
\begin{equation*}
    \mathbb{P}(\mathbf{W} = o | S) = \left\{\begin{array}{cc}
    q_{\text{in}} & o \in S\\
    q_{\text{out}} & o \not\in S
    \end{array}
    \right.
\end{equation*}
It is very easy to see that the collection $(W_i^s)$ is an i.i.d. sample from random variable $\mathbf{W}$. Moreover we have that
\begin{align*}
\mathbb{P}(\mathbf{W} = o) &= \sum_{S\colon o \in S} \mathbb{P}(\mathbf{W} = o | S)\pmatdist_s(S)  
+ \sum_{S\colon o \notin S} \mathbb{P}(\mathbf{W} = o| S)\pmatdist_S(s)  \\
&=  q_{\text{in}}p_s[o] + (1 - p_s[o]) q_{\text{out}}
\end{align*}
Let $Y_i[o] = \frac{\ind_{W_i^s = o} - q_{\text{out}}}{q_{\text{in}} - q_{\text{out}}} $ and $\widehat{p}_s[o] = \frac{1}{n} \sum_{i=1}^n Y_i$. Using the fact that $|Y_i[o]| \leq \frac{1}{q_{\text{in}} - q_{\text{out}}}$ and $\mathbb{E}[Y_i[o]] = p_s[o]$, by Hoeffding's inequality we have with probability $1 - \frac{\delta}{N}$:
\begin{equation*}
|p_s[o] - \widehat{p}_s[o]| \leq \frac{1}{q_{\text{in}} - q_{\text{out}}} \sqrt{\frac{\log(2 N/\delta)}{2 n}}.
\end{equation*}
The result follows by using a union bound over all topics $o$.
\end{proof}
\section{Modeling assumption on topics}
In this section we discuss the modeling assumption on the Topics API used in the experiments (Section~\ref{sec:exp-attacks}). The results of this section show that we can model an attacker using only the parameters $p_s[o]$. 
\begin{assumption}
\label{assum:per_round_independence}
For a fixed round $s$ we assume that the random variable $S_i^s$ representing the top set of users satisfies
$
    \mathbb{P}(o \in S_i^s | o'\in S_i^s) = \alpha_s(o)
$
for all topics $o' \neq o$. 
\end{assumption}
The above assumption suggests some form of independence between the topics belonging to the top set. The following lemma shows that under this assumption $\alpha_s(o)$ is in fact a simple function of $p_s[o]$. 
\begin{lemma}
\label{lemma:alpha_calculation}
Let $S_i^s$ satisfy Assumption~\ref{assum:per_round_independence}. Then
\begin{equation*}
    \alpha_s(o) = \frac{4p_s[o]}{5 - p_s[o]}
\end{equation*}
\end{lemma}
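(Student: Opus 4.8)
The plan is to exploit the combinatorial structure of the Topics API top set: each user's top set $S_i^s$ has exactly $5$ elements (after padding), so $\sum_{o \in \topics} \ind\{o \in S_i^s\} = 5$ deterministically. Taking expectations gives $\sum_{o} p_s[o] = 5$. The idea is to compute $\mathbb{P}(o \in S_i^s \mid o' \in S_i^s)$ in two different ways and match them against Assumption~\ref{assum:per_round_independence}.

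First I would fix a topic $o'$ and condition on the event $\{o' \in S_i^s\}$, which has probability $p_s[o']$. Conditioned on this event, the remaining four slots of $S_i^s$ are filled by four distinct topics drawn from $\topics \setminus \{o'\}$. Summing the indicator $\ind\{o \in S_i^s\}$ over all $o \neq o'$ and taking the conditional expectation, the left side is $\sum_{o \neq o'} \mathbb{P}(o \in S_i^s \mid o' \in S_i^s)$, while the right side counts the expected number of the remaining slots, which is exactly $4$. By Assumption~\ref{assum:per_round_independence}, each term on the left equals $\alpha_s(o)$, so we get $\sum_{o \neq o'} \alpha_s(o) = 4$; but I would rather keep things in terms of a single topic, so the cleaner route is: fix $o$ and sum over the conditioning topic. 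Actually the most direct approach is to write $\mathbb{P}(o \in S_i^s, o' \in S_i^s) = \alpha_s(o)\, p_s[o']$ by the assumption (and by symmetry of the joint probability this also equals $\alpha_s(o')\, p_s[o]$, though I will not need that). Now sum both sides over all $o' \neq o$:
\[
\sum_{o' \neq o} \mathbb{P}(o \in S_i^s, o' \in S_i^s)
= \alpha_s(o) \sum_{o' \neq o} p_s[o']
= \alpha_s(o)\,(5 - p_s[o]).
\]
For the left side, I would interchange the sum and expectation: $\sum_{o' \neq o} \ind\{o \in S_i^s\}\ind\{o' \in S_i^s\} = \ind\{o \in S_i^s\}\sum_{o' \neq o}\ind\{o' \in S_i^s\} = \ind\{o \in S_i^s\}\cdot 4$, since conditioned on $o \in S_i^s$ there are exactly four other topics in $S_i^s$. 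Taking expectations, the left side equals $4\, p_s[o]$. Equating the two expressions gives $\alpha_s(o)(5 - p_s[o]) = 4 p_s[o]$, hence $\alpha_s(o) = \frac{4 p_s[o]}{5 - p_s[o]}$, as claimed.

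The main obstacle, such as it is, is being careful about the padding convention and the exact cardinality: the statement relies on $|S_i^s| = 5$ holding with probability one (which is the stated convention in the experimental section, where users with fewer than $5$ topics get padded with random topics), and on all five entries being distinct so that $\sum_o \ind\{o \in S_i^s\} = 5$ rather than something smaller. Once that is granted, the argument is just the handshake/double-counting identity combined with the assumed pairwise-conditional-probability structure, so there is no real analytic difficulty. I would also note in passing that the derivation implicitly assumes $p_s[o] < 5$ for the division to be valid, which holds since no single topic is in every user's top set with probability one (and if it were, $\alpha_s(o)$ would be trivially $1$).
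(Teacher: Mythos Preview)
Your proof is correct and follows essentially the same approach as the paper: both use the double-counting identity $\sum_{o'\neq o}\ind\{o\in S_i^s\}\ind\{o'\in S_i^s\}=4\,\ind\{o\in S_i^s\}$ (so the sum of joint probabilities equals $4p_s[o]$), together with $\sum_{o'\neq o}p_s[o']=5-p_s[o]$, and then apply Assumption~\ref{assum:per_round_independence} to factor $\mathbb{P}(o,o'\in S_i^s)=\alpha_s(o)\,p_s[o']$. The only cosmetic difference is that the paper packages the two combinatorial identities into a separate proposition before invoking them.
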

\begin{proof}
By proposition \ref{prop:condsum} we know that:
\begin{align*}
    4p_s[o] &= \sum_{o' \neq o} \mathbb{P}((o, o') \in S_i^s)  
    = \sum_{o' \neq o} \mathbb{P}(o \in S_i^s |  o' \in S_i^s) P(o'\in S_i^s) \\
    &= \alpha_s(o)\sum_{o'\neq o}p_s[o'] 
    = \alpha_s(o)(5 - p_s[o]),
\end{align*}
where we have used Bayes rule and Assumption~\ref{assum:per_round_independence} for the second and third equalities respectively. The statement of lemma follows by rearranging terms.
\end{proof}

\begin{proposition}
\label{prop:condsum}
Let $o$ be a fixed topic. The following properties holds for any distribution over the top set $S_i^s$. 
\begin{equation}
    \label{eq:marginal}
    \sum_{o'\neq o}p_s[o'] = 5 - p_s[o]
\end{equation}
\begin{equation}
\label{eq:joint_sum}
    \sum_{o'\neq o}\mathbb{P}((o,o') \in S_i^s) = 4p_s[o]
\end{equation}

\end{proposition}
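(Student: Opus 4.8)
The plan is to exploit the single structural fact underlying the Topics API model: for every user $i$ and every week $s$, the top set $S_i^s$ has size exactly $5$, deterministically. Both identities then follow by linearity of expectation after rewriting the probabilities as expectations of indicator variables. Recall that $p_s[o] = \mathbb{P}(o \in S_i^s) = \E[\ind_{o \in S_i^s}]$.

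For \eqref{eq:marginal}, first I would sum $p_s[o']$ over \emph{all} topics $o' \in \topics$. By linearity of expectation, $\sum_{o' \in \topics} p_s[o'] = \E\bigl[\sum_{o' \in \topics} \ind_{o' \in S_i^s}\bigr] = \E[|S_i^s|] = 5$, since $|S_i^s| = 5$ always. Isolating the term $o' = o$ gives $\sum_{o' \neq o} p_s[o'] = 5 - p_s[o]$, as claimed.

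For \eqref{eq:joint_sum}, I would work conditionally on the event $\{o \in S_i^s\}$. On that event, $S_i^s$ contains $o$ together with exactly $4$ other topics, so $\sum_{o' \neq o} \ind_{o' \in S_i^s} = 4$ whenever $\ind_{o \in S_i^s} = 1$, and of course the product $\ind_{o \in S_i^s}\ind_{o' \in S_i^s}$ vanishes otherwise. Hence
\[
\sum_{o' \neq o} \mathbb{P}\bigl((o,o') \in S_i^s\bigr)
= \E\Bigl[\ind_{o \in S_i^s} \sum_{o' \neq o} \ind_{o' \in S_i^s}\Bigr]
= \E\bigl[\ind_{o \in S_i^s} \cdot 4\bigr]
= 4\, p_s[o].
\]
No distributional assumptions on $S_i^s$ are needed beyond $|S_i^s| = 5$, which matches the statement that these properties hold for \emph{any} distribution over the top set. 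There is no real obstacle here; the only thing to be careful about is that the argument uses the deterministic cardinality of $S_i^s$ (rather than merely $\E|S_i^s| = 5$), which is exactly what the Topics API construction guarantees after padding with random topics.
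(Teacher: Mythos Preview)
Your proposal is correct and follows essentially the same approach as the paper: both arguments reduce to the pointwise indicator identities $\sum_{o'}\ind_{o'\in S_i^s}=5$ and $\ind_{o\in S_i^s}\sum_{o'\neq o}\ind_{o'\in S_i^s}=4\,\ind_{o\in S_i^s}$, then take expectations. The only difference is cosmetic---the paper states the indicator identities first and then passes to expectation, while you work with expectations directly---and your remark that \eqref{eq:joint_sum} genuinely requires $|S_i^s|=5$ almost surely (not merely in expectation) is a correct and useful observation.
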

\begin{proof}
Fix a top set $S_i^s$ of 5 elements. It is then easy to see that
\begin{equation}
\label{eq:indic_marg}
    \ind_{o\in S_i^s} + \sum_{o'\neq o}\ind_{o\in S_i^s} = 5
\end{equation}
Similarly, we claim that 
\begin{equation}
\label{eq:indic_join}
    \sum_{o' \neq o}\ind_{o \in S_i^s} \ind_{o' \in {S_i^s}}= 4 \ind_{o \in S_i^s} 
\end{equation}
Indeed, if $o \notin S_i^s$ then the above expression is trivially true as both sides are $0$. If, on the other hand, $o\in S_i^s$, then there are exactly 4 other elements in the set so $\sum_{o'\neq o} \ind_{o' \in S_i^s} = 4$.  The result follows by taking expectation of \eqref{eq:indic_join} and \eqref{eq:joint_sum}.

\end{proof}

\bibliographystyle{plain}
\bibliography{references.bib}

\end{document}